\title{Restricted Isometry of Fourier Matrices and \\ List Decodability of Random Linear
  Codes}
\author{{\sf Mahdi Cheraghchi}\thanks{%
Email: $\langle$cheraghchi@cmu.edu$\rangle$. %
Research supported in part by the Swiss National Science Foundation research grant PBELP2-133367.
} \quad\quad {\sf Venkatesan Guruswami}\thanks{%
Email: $\langle$guruswami@cmu.edu$\rangle$. 
Research supported in part by a Packard Fellowship and NSF CCF-0963975.
} \quad\quad
  {\sf Ameya Velingker}\thanks{%
Email: $\langle$avelingk@cs.cmu.edu$\rangle$. 
Research supported in part by NSF CCF-0963975.} \\ \\
  Computer Science Department \\
  Carnegie Mellon University\\
  Pittsburgh, PA 15213}
\date{}
\newcommand{\C}{\mathds{C}}
\newcommand{\cC}{\mathcal{C}}
\newcommand{\cE}{\mathcal{E}}
\newcommand{\F}{\mathds{F}}
\newcommand{\R}{\mathds{R}}
\newcommand{\E}{\mathds{E}}
\newcommand{\supp}{\mathsf{supp}}
\newcommand{\eps}{\epsilon}
\newcommand{\innr}[1]{\langle #1 \rangle}
\newtheorem{thm}{Theorem}
\newtheorem{prop}[thm]{Proposition}
\newtheorem{claim}[thm]{Claim}
\newtheorem{lem}[thm]{Lemma}
\newtheorem{observ}[thm]{Observation}
\theoremstyle{definition}
\newtheorem{defn}[thm]{Definition}
\newcommand{\lin}{\mathsf{Lin}}
\newcommand{\sgn}{\mathrm{sgn}}
\newcommand{\tk}{{\tilde{k}}}
\newcommand{\I}{\mathbf{i}}
\renewcommand{\vec}[1]{{#1}}
\newcommand{\Vnote}[1]{} % Resolved margin notes by Venkat
\newcommand{\Mnote}[1]{} % Resolved margin notes by Mahdi
\newcommand{\Anote}[1]{} % Resolved margin notes by Ameya
\newcommand{\Char}{\varphi}
\newcommand{\lsim}{\lesssim}
\newcommand{\gsim}{\gtrsim}
\newcommand{\erf}{\mathrm{erf}}
\newcommand{\erfc}{\mathrm{erfc}}
\newcommand{\Rows}{T}
\newcommand{\Ball}{\mathcal{B}_2^{k,N}}
\begin{document}
     \maketitle %\thispagestyle{empty}

     \begin{abstract}
       We prove that a random linear code over $\F_q$, with
       probability arbitrarily close to $1$, is list decodable at
       radius $1-1/q-\eps$ with list size $L=O(1/\eps^2)$ and rate
       $R=\Omega_q(\eps^2/(\log^3(1/\eps)))$. Up to the
       polylogarithmic factor in $1/\eps$ and constant factors
       depending on $q$, this matches the lower bound
       $L=\Omega_q(1/\eps^2)$ for the list size and upper bound
       $R=O_q(\eps^2)$ for the rate. Previously only existence (and
       not abundance) of such codes was known for the special case
       $q=2$ (Guruswami, H{\aa}stad, Sudan and Zuckerman, 2002).

       In order to obtain our result, we employ a relaxed version of
       the well known Johnson bound on list decoding that translates
       the \emph{average} Hamming distance between codewords to list
       decoding guarantees. We furthermore prove that the desired
       average-distance guarantees hold for a code provided that a
       natural complex matrix encoding the codewords satisfies the
       Restricted Isometry Property with respect to the Euclidean norm
       (RIP-2). For the case of random binary linear codes, this
       matrix coincides with a random submatrix of the Hadamard-Walsh
       transform matrix that is well studied in the compressed sensing
       literature.

       Finally, we improve the analysis of Rudelson and Vershynin
       (2008) on the number of random frequency samples required for
       exact reconstruction of $k$-sparse signals of length
       $N$. Specifically, we improve the number of samples from $O(k
       \log(N) \log^2(k) (\log k + \log\log N))$ to $O(k \log(N) \cdot
       \log^3(k))$. The proof involves bounding the expected 
       supremum of a related Gaussian process by using an improved 
       analysis of the metric defined by the process. This improvement 
       is crucial for our application in list decoding.
     \end{abstract}

\newpage

%==============================================================================
%==============================================================================

\section{Introduction}

This work is motivated by the list decodability properties of random
linear codes for correcting a large fraction of errors, approaching
the information-theoretic maximum limit.  We prove a near-optimal
bound on the rate of such codes, by making a connection to and
establishing improved bounds on the restricted isometry property of
random submatrices of Hadamard matrices.

A $q$-ary error correcting code $\cC$ of block length $n$ is a subset
of $[q]^n$, where $[q]$ denotes any alphabet of size $q$. The rate of
such a code is defined to be $(\log_q |\cC|)/n$. A good code $\cC$
should be large (rate bounded away from $0$) and have its elements
(codewords) well ``spread out." The latter property is motivated by
the task of recovering a codeword $c \in \cC$ from a noisy version $r$
of it that differs from $c$ in a bounded number of coordinates. Since
a random string $r \in [q]^n$ will differ from $c$ on an expected
$(1-1/q)n$ positions, the information-theoretically maximum fraction
of errors one can correct is bounded by the limit $(1-1/q)$. In fact,
when the fraction of errors exceeds $\frac12 (1-1/q)$, it is not
possible to unambiguously identify the close-by codeword to the noisy
string $r$ (unless the code has very few codewords, i.e., a rate
approaching zero).

In the model of list decoding, however, recovery from a fraction of
errors approaching the limit $(1-1/q)$ becomes possible. Under list
decoding, the goal is to recover a small list of all codewords of
$\cC$ differing from an input string $r$ in at most $\rho n$
positions, where $\rho$ is the error fraction (our interest in this
paper being the case when $\rho$ is close to $1-1/q$). This requires
that $\cC$ have the following sparsity property, called {\em
  $(\rho,L)$-list decodability}, for some small $L$ : for every $r \in
[q]^n$, there are at most $L$ codewords within Hamming distance $\rho
n$ from $r$. We will refer to the parameter $L$ as the ``list size"
--- it refers to the maximum number of codewords that the decoder may
output when correcting a fraction $\rho$ of errors. Note that
$(\rho,L)$-list decodability is a strictly combinatorial notion, and
does not promise an efficient algorithm to compute the list of
close-by codewords. In this paper, we only focus on this combinatorial
aspect, and study a basic trade-off between between $\rho$, $L$, and
the rate for the important class of random linear codes, when $\rho
\to 1-1/q$. We describe the prior results in this direction and state
our results next.

For integers $q,L\ge 2$, a random $q$-ary code of rate $R = 1
-h_q(\rho)-1/L$ is $(\rho,L)$-list decodable with high
probability. Here $h_q\colon [0,1-1/q] \to [0,1]$ is the $q$-ary
entropy function: $h_q(x) = x \log_q(q-1) - x \log_q x - (1-x)
\log_q(1-x)$. This follows by a straightforward application of the
probabilistic method, based on a union bound over all centers $r \in
[q]^n$ and all $(L+1)$-element subsets $S$ of codewords that all
codewords in $S$ lie in the Hamming ball of radius $\rho n$ centered
at $r$. For $\rho = 1-1/q-\eps$, where we think of $q$ as fixed and
$\eps \to 0$, this implies that a random code of rate
$\Omega_q(\eps^2)$ is $(1-1/q-\eps,O_q(1/\eps^2))$-list
decodable. (Here and below, the notation $\Omega_q$ and $O_q$ hide
constant factors that depend only on $q$.)

Understanding list decodable codes at the extremal radii \Mnote{Added
  references on applications.}  $\rho = 1-1/q-\eps$, for small $\eps$,
is of particular significance mainly due to numerous applications that
depend on this regime of parameters. For example, one can mention
hardness amplification of Boolean functions \cite{ref:STV01},
construction of hardcore predicates from one-way functions
\cite{ref:GL89}, construction of pseudorandom generators
\cite{ref:STV01} and randomness extractors \cite{ref:Tre01},
inapproximability of $\mathsf{NP}$ witnesses \cite{ref:KS99}, and
approximating the VC dimension \cite{ref:MU01}. Moreover,
\emph{linear} list-decodable codes are further appealing due to their
symmetries, succinct description, and efficient encoding. For some
applications, linearity of list decodable codes is of crucial
importance. For example, the black-box reduction from list decodable
codes to capacity achieving codes for additive noise channels in
\cite{ref:GS10}, or certain applications of Trevisan's extractor
\cite{ref:Tre01} (e.g., \cite[\S~3.6, \S~5.2]{ref:Che10}) rely on
linearity of the underlying list decodable code. Furthermore, list
decoding of linear codes features an interplay between linear
subspaces and Hamming balls and their intersection properties, which
is of significant interest from a combinatorial perspective.

This work is focused on random {\em linear} codes, which are subspaces
of $\F_q^n$, where $\F_q$ is the finite field with $q$ elements.
% (henceforth therefore, we assume $q$ is a prime power).
A random linear code $\cC$ of rate $R$ is sampled by picking $k=Rn$
random vectors in $\F_q^n$ and letting $\cC$ be their
$\F_q$-span. Since the codewords of $\cC$ are now not all independent
(in fact they are not even $3$-wise independent), the above naive
argument only proves the $(\rho,L)$-list decodability property for
codes of rate $1-h_q(\rho)-1/\log_q (L+1)$~\cite{ZP81}.\footnote{The
  crux of the argument is that any $L$ non-zero vectors in $\F_q^k$
  must have a subset of $\log_q(L+1)$ linearly independent vectors,
  and these are mapped independently by a random linear code. This
  allows one to effectively substitute $\log_q (L+1)$ in the place of
  $L$ in the argument for fully random codes.} For the setting $\rho =
1-1/q-\eps$, this implies a list size bound of $\exp(O_q(1/\eps^2))$
for random linear codes of rate $\Omega_q(\eps^2)$, which is
exponentially worse than for random codes. Understanding if this
exponential discrepancy between general and linear codes is inherent
was raised an open question by Elias~\cite{elias91}. Despite much
research, the exponential bound was the best known for random linear
codes (except for the case of $q=2$, and even for $q=2$ only an
existence result was known; see the related results section below for
more details).

Our main result in this work closes this gap between random linear and
random codes, up to polylogarithmic factors in the rate. We state a
simplified version of the main theorem (Theorem~\ref{thm:main}) below.

\begin{thm}[Main, simplified]
  \label{thm:main-intro}
  Let $q$ be a prime power, and let $\eps > 0$ be a constant
  parameter. Then for some constant $a_q > 0$ only depending on $q$
  and all large enough integers $n$, a random linear code $\cC
  \subseteq \F_q^n$ of rate $a_q \eps^2/\log^3 (1/\eps)$ is
  $(1-1/q-\eps, O(1/\eps^2))$-list decodable with probability at least
  $0.99$. (one can take $a_q = \Omega(1/\log^4 q)$.)\Vnote{(DONE)
    Check $a_q$ and fix later.}
\end{thm}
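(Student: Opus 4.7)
The plan is to follow the three-step strategy outlined in the abstract.

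\emph{Step 1: A relaxed $q$-ary Johnson bound.} I first show that $(\rho, L)$-list decodability follows from a lower bound on the \emph{average} pairwise Hamming distance inside every $(L+1)$-subset of codewords. Suppose $c_0, \ldots, c_L \in \cC$ all lie within Hamming distance $\rho n = (1 - 1/q - \eps)n$ of some $r \in \F_q^n$. For each coordinate $j$, let $m_a^{(j)}$ denote the number of $i$ with $c_i[j] = a$; the distance constraint gives $\sum_j m_{r[j]}^{(j)} \geq (1 - \rho)(L+1)n$, and Cauchy--Schwarz over the $q - 1$ symbols other than $r[j]$ bounds the contribution of column $j$ to $\sum_{i \neq i'} \dist_H(c_i, c_{i'})$ by $(L+1)^2[1 - u_j^2 - (1 - u_j)^2/(q-1)]$ with $u_j := m_{r[j]}^{(j)}/(L+1)$. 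The bracketed function of $u$ is concave and maximized at $u = 1/q$, so Jensen applied to $\bar u := (1/n)\sum_j u_j \geq 1/q + \eps$ yields
\[
\frac{1}{L(L+1)} \sum_{i \neq i'} \dist_H(c_i, c_{i'}) \;\leq\; (1 - 1/q) n \;-\; \frac{q}{q-1}\,\eps^2 n \;+\; O(n/L).
\]
Contrapositively, $(1 - 1/q - \eps, L)$-list decodability follows once every $(L+1)$-subset of $\cC$ has average pairwise distance strictly exceeding the right-hand side; taking $L = \Theta_q(1/\eps^2)$ makes the $O(n/L)$ slack a small fraction of $\eps^2 n$.

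\emph{Step 2: From average distance to RIP-2.} Fix a nontrivial additive character $\chi$ of $\F_q$ and use $\dist_H(c, c') = (1 - 1/q) n - (1/q) \sum_{a \neq 0} \sum_j \chi(a (c_j - c'_j))$. Summing over ordered pairs in the list, with the codewords coming from messages $x_0, \ldots, x_L \in \F_q^k$ via a random generator having rows $g_1, \ldots, g_n \in \F_q^k$, telescopes to
\[
\sum_{i \neq i'} \dist_H(c_i, c_{i'}) = (L+1)^2 (1 - 1/q) n \;-\; \frac{1}{q} \sum_{a \neq 0} \|T^{(a)}\|_2^2,
\]
where $T^{(a)}_j := \sum_i \chi(a \innr{g_j, x_i})$. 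Crucially, $T^{(a)} = \sqrt{n}\, A z^{(a)}$ for the $n \times q^k$ matrix $A$ with $(j,y)$-entry $\chi(\innr{g_j, y})/\sqrt{n}$ and the $(L+1)$-sparse indicator $z^{(a)}$ of $\{a x_0, \ldots, a x_L\} \subseteq \F_q^k$ (distinct for every $a \neq 0$). An RIP-2 bound $\|A z\|_2^2 \leq (1 + \delta) \|z\|_2^2$ valid for all $(L+1)$-sparse $z$ then forces $\|T^{(a)}\|_2^2 \leq n(1 + \delta)(L+1)$ for every $a \neq 0$, giving average pairwise distance at least $(1 - 1/q) n (1 - \delta/L)$. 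For $L = \Theta_q(1/\eps^2)$ and a sufficiently small $q$-dependent constant $\delta = \delta_0(q)$, this beats the Step-1 threshold. Because RIP-2 is \emph{uniform} over all $(L+1)$-sparse $z$, a single realization of that event handles every possible received word and every possible list simultaneously; no union bound over $r$ or over message subsets is needed.

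\emph{Step 3: RIP for random Fourier submatrices, and rate calculation.} Since $g_1, \ldots, g_n$ are i.i.d.\ uniform in $\F_q^k$, $A$ is exactly a uniform $n$-row sub-sample of the (normalized) character matrix of the group $\F_q^k$, i.e., of the $N \times N$ Fourier matrix with $N = q^k$. Applying the paper's improved RIP-2 theorem for random Fourier submatrices, namely that $n = O(s \log(N) \log^3(s)/\delta^2)$ random rows suffice for RIP-2 of order $s$ with distortion $\delta$ and failure probability $\leq 0.01$, and substituting $s = L+1 = \Theta(1/\eps^2)$, $\delta = \delta_0(q)$, and $\log N = k \log q$, reduces the requirement to $n \gtrsim_q k \log^3(1/\eps)/\eps^2$, i.e., $R = k/n \leq a_q\, \eps^2/\log^3(1/\eps)$ for some $a_q > 0$ depending only on $q$. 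This is exactly the rate claimed.

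The main obstacle is the RIP bound invoked in Step~3; Steps~1 and~2 are clean convexity and character-theoretic identities. The RIP analysis itself requires a delicate chaining estimate on the expected supremum of a Gaussian process associated to random Fourier measurements, and the improvement over Rudelson--Vershynin---specifically the removal of the extra $\log k + \log \log N$ factor via a tighter control of the metric of that process---is exactly what prevents a spurious $\log n$ loss (through $\log N = k \log q$) and preserves the clean $\log^3(1/\eps)$ factor in the final rate.
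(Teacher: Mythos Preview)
Your proposal is correct and follows essentially the same three-step route as the paper: a relaxed Johnson bound (your direct column-counting convexity argument is equivalent to the paper's geometric proof via the simplex encoding in Theorem~\ref{thm:avg-johnson}), a reduction from RIP-2 to average pairwise distance (your single-character matrix $A$ together with the $q-1$ sparse vectors $z^{(a)}$ is just a repackaging of the paper's simplex-encoded matrix $\Char(\cC)$ in Lemma~\ref{lem:RIPtoDist}, since $\sum_{a\neq 0}\|T^{(a)}\|_2^2 = \|\sum_i \Char(c_i)\|_2^2$), and finally the improved RIP-2 bound for subsampled Fourier/orthonormal matrices (Theorem~\ref{thm:RIP}). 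One cosmetic point: your $L=\Theta_q(1/\eps^2)$ and $\delta=\delta_0(q)$ can in fact be taken as absolute constants independent of $q$, exactly as the paper does with $L=1+\lceil 1.5/\eps^2\rceil$ and RIP constant $1/2$.
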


We remark that both the rate and list size are close to optimal for
list decoding from a $(1-1/q-\eps)$ fraction of errors. For rate, this
follows from the fact the $q$-ary ``list decoding capacity" is given
by $1-h_q(\rho)$, which is $O_q(\eps^2)$ for $\rho=1-1/q-\eps$. For
list size, a lower bound of $\Omega_q(1/\eps^2)$ is known --- this
follows from \cite{blinovsky} for $q=2$, and was shown for all $q$ in
\cite{GV-ld-lb} (and also in \cite{blin-q-ary} under a convexity
conjecture that was later proved in \cite{blin-convexity}).  We have
also assumed that the alphabet size $q$ is fixed and have not
attempted to obtain the best possible dependence of the constants on
the alphabet size.

\subsection{Related results}

We now discuss some other previously known results concerning list
decodability of random linear codes.

First, it is well known that a random linear code of rate
$\Omega_q(\eps^4)$ is $(1-1/q-\eps,O(1/\eps^2))$-list decodable with
high probability. This follows by combining the Johnson bound for list
decoding (see, for example, \cite{GS-johnson}) with the fact that such
codes lie on the Gilbert-Varshamov bound and have relative distance
$1-1/q-\eps^2$ with high probability. This result gets the correct
quadratic dependence in list size, but the rate is worse.

Second, for the case of $q=2$, the existence of $(\rho,L)$-list
decodable binary linear codes of rate $1-h(\rho)-1/L$ was proved in
\cite{GHSZ}. For $\rho=1/2-\eps$, this implies the existence of binary
linear codes of rate $\Omega(\eps^2)$ list decodable with list size
$O(1/\eps^2)$ from an error fraction $1/2-\eps$. This matches the
bounds for random codes, and is optimal up to constant
factors. However, there are two shortcomings with this result: (i) it
only works for $q=2$ (the proof makes use of this in a crucial way,
and extensions of the proof to larger $q$ have been elusive), and (ii)
the proof is based on the semi-random method. It only shows the
existence of such a code while failing to give any sizeable lower
bound on the probability that a random linear code has the claimed
list decodability property.

Motivated by this state of affairs, in \cite{GHK}, the authors proved
that a random $q$-ary linear code of rate $1-h_q(\rho)-C_{\rho,q}/L$
is $(\rho,L)$-list decodable with high probability, for some
$C_{\rho,q} < \infty$ that depends on $\rho,q$. This matches the
result for completely random codes up to the leading constant
$C_{\rho,q}$ in front of $1/L$. Unfortunately, for $\rho =
1-1/q-\eps$, the constant $C_{\rho,q}$ depends exponentially\footnote{
  The constant $C_{\rho,q}$ depends exponentially on
  $1/\delta_{\rho}$, where $q^{-\delta_p n}$ is an upper bound on the
  probability that two random vectors in $\F_q^n$ of relative Hamming
  weight at most $\rho$, chosen independently and uniformly among all
  possibilities, sum up (over $\F_q^n$) to a vector of Hamming weight
  at most $\rho$. When $\rho = 1-1/q-\eps$, we have $\delta_\rho =
  \Theta_q(\eps^2)$ which makes the list size exponentially large.  }
on $1/\eps$. Thus, this result only implies an exponential list size
in $1/\eps$, as opposed to the optimal $O(1/\eps^2)$ that we
seek. \Vnote{Should say why/how $C_{1/2-\eps,1/2}$ depends on $\eps$?}

Summarizing, for random linear codes to achieve a polynomial in
$1/\eps$ list size bound for error fraction $1-1/q-\eps$, the best
lower bound on rate was $\Omega(\eps^4)$. We are able to show that
random linear codes achieve a list size growing quadratically in
$1/\eps$ for a rate of $\tilde{\Omega}(\eps^2)$. One downside of our
result is that we do not get a probability bound of $1-o(1)$, but only
$1-\gamma$ for any desired constant $\gamma > 0$ (essentially our rate
bound degrades by a $\log (1/\gamma)$ factor).

Finally, there are also some results showing limitations on list
decodability of random codes. It is known that both random codes and
random linear codes of rate $1-h_q(\rho)-\eta$ are, with high
probability, {\em not} $(\rho, c_{\rho,q}/\eta)$-list
decodable~\cite{rudra11,GN12}. For arbitrary (not necessarily random)
codes, the best lower bound on list size is $\Omega(\log
(1/\eta))$~\cite{blinovsky,GN12}.

\subsection{Proof technique}

The proof of our result uses a different approach from the earlier
works on list decodability of random linear codes
\cite{ZP81,elias91,GHSZ,GHK}. Our approach consists of three steps.

\medskip \noindent \textbf{Step 1:} Our starting point is a relaxed
version of the Johnson bound for list decoding that only requires the
\emph{average} pairwise distance of $L$ codewords to be large (where
$L$ is the target list size), instead of the minimum distance of the
code.

Technically, this extension is easy and pretty much follows by
inspecting the proof of the Johnson bound. This has recently been
observed for the binary case by Cheraghchi and is implicit in the
survey \cite{ref:Che11}. Here, we give a proof of the relaxed Johnson
bound for a more general setting of parameters, and apply it in a
setting where the usual Johnson bound is insufficient.  Furthermore,
as a side application, we show how the average version can be used to
bound the list decoding radius of codes which do not have too many
codewords close to any codeword --- such a bound was shown via a
different proof in \cite{GKZ08}, where it was used to establish the
list decodability of binary Reed-Muller codes up to their distance.

\medskip \noindent \textbf{Step 2:} Prove that the $L$-wise average
distance property of random linear codes is implied by the order $L$
restricted isometry property (RIP-2) of random submatrices of the
Hadamard matrix (or in general, matrices related to the Discrete
Fourier Transform).

This part is also easy technically, and our contribution lies in
making this connection between restricted isometry and list
decoding. The restricted isometry property has received much attention
lately due to its relevance to compressed sensing (cf.\
\cite{ref:Can08,ref:CRT06,ref:CRT06b,ref:CT06,ref:Don06}) \Vnote{DONE:
  Some citation to compressed sensing here?} and is also connected to
the Johnson-Lindenstrauss dimension reduction lemma~\cite{BDDW08}. Our
work shows another interesting application of this concept.

\medskip \noindent \textbf{Step 3:} Prove the needed restricted
isometry property of the matrix obtained by sampling rows of the
Hadamard matrix.

This is the most technical part of our proof. Let us focus on $q=2$
for simplicity, and let $H$ be the $N \times N$ Hadamard (Discrete
Fourier Transform) matrix with $N=2^n$, whose $(x,y)$'th entry is
$(-1)^{\innr{x,y}}$ for $x,y \in \{0,1\}^n$. We prove that (the scaled
version of) a random submatrix of $H$ formed by sampling a subset of
$m = O( k \log^3 k \log N)$ rows of $H$ satisfies RIP of order $k$
with probability $0.99$. This means that every $k$ columns of this
sampled matrix $M$ are nearly orthogonal --- formally, every $m \times
k$ submatrix of $M$ has all its $k$ singular values close to $1$.

For random matrices $m \times N$ with i.i.d Gaussian or $\pm 1$
entries, it is relatively easy to prove RIP-2 of order $k$ when $m =
O(k \log N)$~\cite{BDDW08}. Proving such a bound for submatrices of
the Discrete Fourier Transform (DFT) matrix (as conjectured in
\cite{ref:RV08}) has been an open problem for many years. \Vnote{Were
  RV the first to state this open question? Or is it Candes-Tao? Can
  we something stronger on why this is hard?}
The difficulty is that the entries within a row are no longer
independent, and not even triple-wise independent. The best proven
upper bound on $m$ for this case was $O(k \log^2 k (\log k + \log \log
N) \log N)$, improving an earlier upper bound $O(k \log^6 N)$ of
Cand\`es and Tao \cite{ref:CT06}.  We improve the bound to $O(k \log^3
k \log N)$ --- the key gain is that we do {\em not} have the $\log
\log N$ factor. This is crucial for our list decoding connection, as
the rate of the code associated with the matrix will be $(\log N)/m$,
which would be $o(1)$ if $m =\Omega(\log N \log \log N)$. We will take
$k = L = \Theta(1/\eps^2)$ (the target list size), and the rate of the
random linear code will be $\Omega(1/(k \log^3 k))$, giving the bounds
claimed in Theorem~\ref{thm:main-intro}. We remark that any
improvement of the RIP bound towards the information-theoretic limit
$m = \Omega(k \log(N/k))$, a challenging open problem, would
immediately translate into an improvement on the list decoding rate of
random linear codes via our reductions.

Our RIP-2 proof for row-subsampled DFT matrices proceeds along the
lines of \cite{ref:RV08}, and is based on upper bounding the
expectation of the supremum of a certain {\em Gaussian
  process}~\cite[Chap. 11]{ref:banach}. \Vnote{Is Chap. 11 the right
  one to cite? Mahdi: Yes.}  The index set of the Gaussian process is
$\Ball$, the set of all $k$-sparse unit vectors in $\R^N$, and the
Gaussian random variable $G_x$ associated with $x\in \Ball$ is a
Gaussian linear combination of the squared projections of $x$ on the
rows sampled from the DFT matrix (in the binary case these are just
squared Fourier coefficients)\footnote{We should remark that our setup
  of the Gaussian process is slightly different from \cite{ref:RV08},
  where the index set is $k$-element subsets of $[N]$, and the
  associated Gaussian random variable is the spectral norm of a random
  matrix. Moreover, in \cite{ref:RV08} the number of rows of the
  subsampled DFT matrix is a random variable concentrating around its
  expectation, contrary to our case where it is a fixed number.  We
  believe that the former difference in our setup may make the proof
  accessible to a broader audience.}.  The key to analyzing the
Gaussian process is an understanding of the associated (pseudo)-metric
$X$ on the index set, defined by $\|x-x'\|_X^2 = \E_G
|G_x-G_{x'}|^2$. This metric is difficult to work with directly, so we
upper bound distances under $X$ in terms of distances under a
different metric $X'$. The principal difference in our analysis
compared to \cite{ref:RV08} is in the choice of $X'$ --- instead of
the max norm used in \cite{ref:RV08}, we use a large finite norm
applied to the sampled Fourier coefficients. We then estimate the
covering numbers for $X'$ and use Dudley's theorem to bound the
supremum of the Gaussian process.

It is worth pointing out that, as we prove in this work,
for low-rate random linear codes the average-distance quantity
discussed in Step~1 above is substantially larger than the 
minimum distance of the code. This allows the relaxed version
of the Johnson bound attain better bounds than what the standard
(minimum-distance based) Johnson bound would obtain on 
list decodability of random linear codes. While explicit examples
of linear codes surpassing the standard Johnson bound are
already known in the literature (see \cite{GGR11} and
the references therein), a by-product of our result is that in fact
\emph{most} linear codes (at least in the low-rate regime)
surpass the standard Johnson bound. However, an interesting 
question is to see whether there are codes that are list decodable
even beyond the relaxed version of the Johnson bound studied 
in this work.

\medskip \noindent \textbf{Organization of the paper. }
The rest of the paper is organized as follows. After fixing some
notation, in Section~\ref{sec:Johnson} we prove the average-case
Johnson bound that relates a lower bound on average pair-wise
distances of subsets of codewords in a code to list decoding
guarantees on the code. We also show, in 
Section~\ref{sec:locallySparse}, an application of this bound 
on proving list decodability of ``locally sparse'' codes,
which is of independent interest and simplifies some earlier 
list decoding results. 
In Section~\ref{sec:listDecoding}, we prove our
main theorem on list decodability of random linear codes by
demonstrating a reduction from RIP-2 guarantees of DFT-based complex
matrices to average distance of random linear codes, combined with the
Johnson bound. Finally, the RIP-2 bounds on matrices related to random
linear codes are proved in Section~\ref{sec:RIP}.

\medskip \noindent \textbf{Notation. }
Throughout the paper, we will be interested in list decodability of
$q$-ary codes. We will denote an alphabet of size $q$ by $[q]$ (which
one can identify with $\{0,1,\dots,q-1\}$); for linear codes, the
alphabet will be $\F_q$, the finite field with $q$ elements (when $q$
is a prime power).

We use the notation $\I := \sqrt{-1}$.  When $f \leq C g$ (resp., $f
\geq C g$) for some absolute constant $C > 0$, we use the shorthand $f
\lsim g$ (resp., $f \gsim g$).  We use the notation $\log(\cdot)$ when
the base of logarithm is not of significance (e.g., $f \lsim \log
x$). Otherwise the base is subscripted as in $\log_b(x)$. The natural
logarithm is denoted by $\ln(\cdot)$.

For a matrix $M$ and a multiset of rows $\Rows$, define $M_\Rows$ to
be the matrix with $|\Rows|$ rows, formed by the rows of $M$ picked by
$\Rows$ (in some arbitrary order).  Each row in $M_\Rows$ may be
repeated for the appropriate number of times specified by $\Rows$.

% ==============================================================================
% ==============================================================================

\section{Average-distance based Johnson bound} \label{sec:Johnson}

In this section, we show how the average pair-wise distances between
subsets of codewords in a $q$-ary code translate into list
decodability guarantees on the code.

% \begin{defn}
Recall that the relative Hamming distance between strings $x,y \in
[q]^n$, denoted $\delta(x,y)$, is defined to be the fraction of
positions $i$ for which $x_i \neq y_i$.  The relative distance of a
code $\cC$ is the minimum value of $\delta(x,y)$ over all pairs of
codewords $x \neq y \in \cC$. We define list decodability as follows.
% \end{defn}

\begin{defn}
  A code $\cC \subseteq [q]^n$ is said to be $(\rho, \ell)$-list
  decodable if $\forall y \in [q]^n$, the number of codewords of $\cC$
  within relative Hamming distance less than $\rho$ is at most
  $\ell$.\footnote{We require that the radius is strictly less than
    $\rho$ instead of at most $\rho$ for convenience.}
\end{defn}

The following definition captures a crucial function that allows one
to generically pass from distance property to list decodability.

\begin{defn}[Johnson radius]
  For an integer $q \ge 2$, the Johnson radius function $J_q :
  [0,1-1/q] \rightarrow [0,1]$ is defined by
  \[ J_q(x) := \frac{q-1}{q} \left( 1 - \sqrt{ 1 - \frac{q x}{q-1} }
  \right) \ . \]
\end{defn}

The well known Johnson bound in coding theory states that a $q$-ary
code of relative distance $\delta$ is $(J_q(\delta-\delta/L), L)$-list
decodable (see for instance \cite{GS-johnson}). Below we prove a
version of this bound which does not need every pair of codewords to
be far apart but instead works when the average distance of a set of
codewords is large. The proof of this version of the Johnson bound is
a simple modification of earlier proofs, but working with this version
is a crucial step in our near-tight analysis of the list decodability
of random linear codes.

\begin{thm}[Average-distance Johnson bound]
  \label{thm:avg-johnson}
  Let $\cC \subseteq [q]^n$ be a $q$-ary code and $L \ge 2$ an
  integer. If the average pairwise relative Hamming distance of every
  subset of $L$ codewords of $\cC$ is at least $\delta$, then $\cC$ is
  $(J_q(\delta-\delta/L), L-1)$-list decodable.
\end{thm}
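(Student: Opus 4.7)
The plan is to adapt the classical geometric proof of the Johnson bound, replacing the uniform lower bound on pairwise distances by the average-distance hypothesis. I will argue by contradiction: suppose some $y \in [q]^n$ has $L$ codewords $c_1,\ldots,c_L \in \cC$ at relative distance strictly less than $\tau := J_q(\delta - \delta/L)$ from it.

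First, I would set up the standard ``centered simplex'' embedding of $[q]^n$ into $\R^{qn}$: each symbol $a \in [q]$ at a coordinate is mapped to $e_a - (1/q)\vec 1 \in \R^q$, and the resulting vectors in $\R^{qn}$ are concatenations over the $n$ coordinates. Let $v_1,\ldots,v_L$ and $u$ be the images of $c_1,\ldots,c_L$ and $y$. A direct computation gives $\|v_i\|^2 = n(1-1/q)$, $\|u\|^2 = n(1-1/q)$, and the crucial identity
\[
\langle v_i, v_j \rangle = n\bigl(1 - 1/q - \delta(c_i,c_j)\bigr), \qquad \langle v_i, u\rangle = n\bigl(1 - 1/q - \delta(c_i,y)\bigr).
\]
Thus $\langle v_i, u\rangle > n(1 - 1/q - \tau) =: n\alpha$ for every $i$.

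Next I would apply Cauchy--Schwarz to the vector $w := \sum_{i=1}^L v_i$. On one hand, $\langle w,u\rangle > Ln\alpha$, so $\|w\|^2 \|u\|^2 > L^2 n^2 \alpha^2$. On the other hand, expanding $\|w\|^2$ yields
\[
\|w\|^2 = L n(1-1/q) + \sum_{i\neq j} n\bigl(1-1/q-\delta(c_i,c_j)\bigr) = L^2 n(1-1/q) - L(L-1) n \bar{\delta},
\]
where $\bar\delta$ is the average pairwise relative distance among $c_1,\ldots,c_L$. The hypothesis gives $\bar\delta \ge \delta$, so $\|w\|^2 \le L^2 n(1-1/q) - L(L-1)n\delta$. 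Combining the two bounds and dividing through by $L n^2$ reduces the inequality to
\[
\alpha^2 < (1 - 1/q)\bigl[(1-1/q) - (\delta - \delta/L)\bigr].
\]

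The final step is to observe that the definition of $J_q$ is exactly tuned so that $\alpha = 1 - 1/q - J_q(\delta - \delta/L)$ satisfies the above with equality; hence strict inequality is a contradiction. This forces at most $L-1$ codewords within relative distance $< \tau$ of any $y$, as claimed. The only part that requires care is the book-keeping in the inner-product computations and verifying that the average-distance inequality $\bar\delta \ge \delta$ slots in precisely where the classical proof uses the minimum-distance bound $\delta_{ij} \ge \delta$; I do not expect a real obstacle, since the Cauchy--Schwarz step only ``sees'' the sum $\sum_{i\neq j}\delta(c_i,c_j)$ and is therefore blind to the distinction between minimum and average distance.
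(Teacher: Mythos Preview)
Your proposal is correct and follows essentially the same geometric argument as the paper: both embed $[q]^n$ so that inner products are affine in Hamming distance, assume $L$ codewords lie too close to a center, and derive a contradiction from nonnegativity of a quadratic form, with the average-distance hypothesis entering exactly through $\sum_{i\neq j}\delta(c_i,c_j)$. The only cosmetic differences are that the paper uses the complex roots-of-unity encoding $\Char$ into $\C^{n(q-1)}$ and expands $\bigl\|\sum_i(\Char(c_i)-\beta\,\Char(c_0))\bigr\|^2\ge 0$ for a well-chosen $\beta$, whereas you use the real centered-simplex embedding and phrase the same step as Cauchy--Schwarz between $\sum_i v_i$ and $u$; these are equivalent formulations.
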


Thus, if one is interested in a bound for list decoding with list size
$L$, it is enough to consider the average pairwise Hamming distance of
subsets of $L$ codewords.

\subsection{Geometric encoding of $q$-ary symbols}
We will give a geometric proof of the above result. For this purpose,
we will map vectors in $[q]^n$ to complex vectors and argue about the
inner products of the resulting vectors.

\begin{defn}[Simplex encoding] \label{def:simplex} The simplex
  encoding maps $x \in \{0,1,\cdots,q-1\}$ to a vector $\Char(x) \in
  \C^{q-1}$.  The coordinate positions of this vector are indexed by
  the elements of $[q-1] := \{1,2,\dots,q-1\}$.  Namely, for every
  $\alpha \in [q-1]$, we define $\Char(x)(\alpha) := \omega^{x\alpha}$
  where $\omega = e^{2\pi \I/q}$ is the primitive $q$th complex root
  of unity.
\end{defn}

For complex vectors $\vec{v}=(v_1,v_2,\dots,v_m)$ and $\vec{w}
=(w_1,w_2,\dots,w_m)$, we define their inner product
$\innr{\vec{v},\vec{w}} = \sum_{i=1}^m v_i w_i^*$. From the definition
of the simplex encoding, the following immediately follows:
\begin{equation*} %\label{eqn:simplex}
  \innr{\Char(x), \Char(y)} = \left\{ \begin{array}{ll}
      q-1 & \text{if $x=y$}, \\
      -1 & \text{if $x\neq y$}.
    \end{array} \right.
\end{equation*}
We can extend the above encoding to map elements of $[q]^n$ into
$\C^{n(q-1)}$ in the natural way by applying this encoding to each
coordinate separately. From the above inner product formula, it
follows that for $x, y \in [q]^n$ we have
\begin{equation}
  \label{eq:simplex-enc-dist}
  \innr{\Char(x), \Char(y)} = (q-1)n -q\delta(x,y) n  \ .
\end{equation}
Similarly, we overload the notation to matrices with entries over
$[q]$. Let $M$ be a matrix in $[q]^{n \times N}$. Then, $\Char(M)$ is
an $n(q-1) \times N$ complex matrix obtained from $M$ by replacing
each entry with its simplex encoding, considered as a column complex
vector.

Finally, we extend the encoding to \emph{sets} of vectors (i.e., codes) as well.
For a set $\cC \subseteq [q]^n$, $\Char(\cC)$ is defined as a $(q-1)n
\times |\cC|$ matrix with columns indexed by the elements of $\cC$,
where the column corresponding to each $c \in \cC$ is set to be
$\Char(c)$.

\subsection{Proof of average-distance Johnson bound}

We now prove the Johnson bound based on average distance.

\begin{proof}[Proof (of Theorem~\ref{thm:avg-johnson})] Suppose
  $\{c_1,c_2,\dots,c_L\} \subseteq [q]^n$ are such that their average
  pairwise relative distance is at least $\delta$, i.e.,
  \begin{equation}
    \label{eq:avg-dist-cond}
    \sum_{1 \le i < j \le L} \delta(c_i,c_j)  \ge \delta\cdot  {L \choose 2} \ .
  \end{equation}
  We will prove that $c_1,c_2,\dots,c_L$ cannot all lie in a Hamming
  ball of radius less than $J_q(\delta - \delta/L) n$.  Since every
  subset of $L$ codewords of $\cC$ satisfy \eqref{eq:avg-dist-cond},
  this will prove that $\cC$ is $(J_q(\delta - \delta/L), L-1)$-list
  decodable.

  Suppose, for contradiction, that there exists $c_0 \in [q]^n$ such
  that $\delta(c_0,c_i) \le \rho$ for $i=1,2,\dots,L$ and some $\rho <
  J_q(\delta - \delta/L)$. Recalling the definition of $J_q(\cdot)$,
  note that the assumption about $\rho$ implies
  \begin{equation}
    \label{eq:tau}
    \left( 1 -\frac{q \rho}{q-1} \right)^2 > 1  -\frac{q\delta }{q-1} + \frac{q}{q-1} \frac{\delta}{L} \ . 
  \end{equation}

  For $i=1,2,\dots,L$, define the vector $v_i = \Char(c_i) - \beta
  \Char(c_0) \in \C^{n(q-1)}$, for some parameter $\beta$ to be chosen
  later.  By \eqref{eq:simplex-enc-dist} and the assumptions about
  $c_0,c_1,\dots,c_L$, we have $\innr{\Char(c_i),\Char(c_0)} \ge (q-1)
  n - q \rho n$, and $\sum_{1 \le i < j \le L}
  \innr{\Char(c_i),\Char(c_j)} \le {L \choose 2} \bigl( (q-1)n - q
  \delta n \bigr)$. We have
  \begin{align*}
    0 & \le   \bigg\langle \sum_{i=1}^L v_i, ~\sum_{i=1}^L v_i \bigg\rangle = \sum_{i=1}^L \innr{v_i,v_i} + 2 \cdot \sum_{1\le i < j \le L} \innr{v_i,v_j} \\
    &  \leq L \bigl( n(q-1) + \beta^2 n (q-1) - 2 \beta (n(q-1) - q \rho n )\bigr) + \\
    & \quad +  L(L-1)  \bigl(  n(q-1) - q \delta n + \beta^2 n (q-1) - 2 \beta (n(q-1) - q \rho n ) \bigr) \\
    & = L^2 n (q-1) \left( \frac{q}{q-1} \frac{\delta}{L} + \biggl( 1-
      \frac{q \delta}{q-1} + \beta^2 - 2 \beta \Bigl( 1 - \frac{q
        \rho}{q-1} \Bigr) \biggr) \right)
  \end{align*}
  Picking $\beta = 1 - \frac{q \rho}{q-1}$ and recalling
  \eqref{eq:tau}, we see that the above expression is negative, a
  contradiction.
\end{proof}

\subsection{An application: List decodability of Reed-Muller and
  locally sparse codes} \label{sec:locallySparse}

Our average-distance Johnson bound implies the following combinatorial
result for the list decodability of codes that have few codewords in a
certain vicinity of every codeword. The result allows one to translate
a bound on the number of codewords in balls centered at codewords to a
bound on the number of codewords in an arbitrary Hamming ball of
smaller radius.  An alternate proof of the below bound (using a
``deletion" technique) was given by Gopalan, Klivans, and
Zuckerman~\cite{GKZ08} where they used it to argue the list
decodability of (binary) Reed-Muller codes up to their relative
distance. A mild strengthening of the deletion lemma was later used in
\cite{GGR11} to prove combinatorial bounds on the list decodability of
tensor products and interleavings of binary linear codes.

\begin{lem}
  \label{lem:deletion}
  Let $q \ge 2$ be an integer and $\eta \in (0,1-1/q]$. Suppose $\cC$
  is a $q$-ary code such that for every $c \in \cC$, there are at most
  $A$ codewords of relative distance less than $\eta$ from $c$
  (including $c$ itself).  Then, for every positive integer $L \ge 2$,
  $\cC$ is $(J_q(\eta-\eta/L), A L-1)$-list decodable.
\end{lem}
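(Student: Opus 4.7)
The plan is to derive Lemma~\ref{lem:deletion} as a direct consequence of Theorem~\ref{thm:avg-johnson} applied with list-size parameter $AL$ rather than $L$. The key observation is that the ``local sparsity'' hypothesis on $\cC$ prevents too many pairs among any fixed set of $AL$ codewords from being close, which forces the \emph{average} pairwise distance of such a set to be almost $\eta$; after dividing by the appropriate factor to feed into the average-distance Johnson bound, the radius we get out is exactly $J_q(\eta - \eta/L)$.

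Concretely, I would fix an arbitrary subset $S = \{c_1, \dots, c_{AL}\} \subseteq \cC$ of size $AL$. For each $c_i \in S$, by hypothesis there are at most $A-1$ other codewords of $\cC$ (and hence at most $A-1$ elements of $S \setminus \{c_i\}$) within relative distance less than $\eta$ from $c_i$. Counting ordered pairs, the number of unordered pairs $\{i,j\}$ with $\delta(c_i, c_j) < \eta$ is at most $\frac{(A-1) \cdot AL}{2}$, so the number of ``far'' pairs (with $\delta(c_i, c_j) \ge \eta$) is at least
\[
\binom{AL}{2} - \frac{(A-1)AL}{2} \;=\; \binom{AL}{2} \cdot \frac{A(L-1)}{AL-1}.
\]
Bounding the close pairs by $0$ and the far pairs by $\eta$, the average pairwise relative distance over $S$ is at least $\delta := \eta \cdot \frac{A(L-1)}{AL-1}$.

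Now I would apply Theorem~\ref{thm:avg-johnson} with parameter $L' := AL$ and lower bound $\delta$ on the average pairwise distance of every $L'$-subset. The conclusion is that $\cC$ is $(J_q(\delta - \delta/L'),\, L'-1)$-list decodable, that is, $(J_q(\delta - \delta/(AL)), AL-1)$-list decodable. A direct calculation gives
\[
\delta - \frac{\delta}{AL} \;=\; \delta \cdot \frac{AL-1}{AL} \;=\; \eta \cdot \frac{A(L-1)}{AL-1} \cdot \frac{AL-1}{AL} \;=\; \eta - \frac{\eta}{L},
\]
so the resulting list-decoding radius is precisely $J_q(\eta - \eta/L)$ with list size $AL - 1$, as claimed.

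There is no real obstacle here: the entire content is the choice to invoke the average-distance Johnson bound at scale $AL$ instead of the more natural-looking scale $L$, together with the observation that the arithmetic of $\delta - \delta/(AL)$ versus $\eta - \eta/L$ matches exactly. The only thing to double-check is that the ``at most $A$ codewords including itself'' convention (so at most $A-1$ distinct neighbors) is handled correctly in the counting step above, which it is.
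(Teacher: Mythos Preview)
Your proposal is correct and essentially identical to the paper's proof: both fix an arbitrary $AL$-subset, use the local sparsity hypothesis to lower-bound the average pairwise distance by $\eta\cdot\frac{A(L-1)}{AL-1}$, and then invoke Theorem~\ref{thm:avg-johnson} at scale $AL$ together with the identity $\delta-\delta/(AL)=\eta-\eta/L$. The only cosmetic difference is that the paper bounds the row sums of distances directly while you count close/far pairs, which yields the same average.
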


Note that setting $A=1$ above gives the usual Johnson bound for a code
of relative distance at least $\eta$.

\begin{proof}
  We will lower bound the average pairwise relative distance of every
  subset of $AL$ codewords of $\cC$, and then apply
  Theorem~\ref{thm:avg-johnson}.

  Let $c_1,c_2,\dots,c_{AL}$ be distinct codewords of $\cC$. For
  $i=1,2,\dots,AL$, the sum of relative distances of $c_j$, $j\neq i$,
  from $c_i$ is at least $(AL-A) \eta$ since there are at most $A$
  codewords at relative distance less than $\eta$ from
  $c_i$. Therefore
  \[ \frac{1}{{{AL} \choose 2}} \cdot \sum_{1 \le i < j \le AL}
  \delta(c_i,c_j) \ge \frac{AL \cdot (AL-A) \eta}{AL(AL-1)} =
  \frac{A(L-1)}{AL-1} \eta \ . \] Setting $\eta' = \frac{A (L-1)
    \eta}{AL-1}$, Theorem~\ref{thm:avg-johnson} implies that $\cC$ is
  $(J_q(\eta'-\frac{\eta'}{AL}), A L-1)$-list decodable. But $\eta' -
  \frac{\eta'}{AL} = \eta - \eta/L$, so the claim follows.
\end{proof}

% ==============================================================================
% ==============================================================================

\section{Proof of the list decoding result} \label{sec:listDecoding}

In this section, we prove our main result on list decodability of
random linear codes. The main idea is to use the \emph{restricted
  isometry property (RIP)} of complex matrices arising from random
linear codes for bounding average pairwise distances of subsets of
codewords. Combined with the average-distance based Johnson bound
shown in the previous section, this proves the desired list decoding
bounds. The RIP-2 condition that we use in this work is defined as
follows.

\begin{defn} \label{def:RIP} We say that a complex matrix $M \in \C^{m
    \times N}$ satisfies RIP-2 of order $k$ with constant $\delta$ if,
  for any $k$-sparse vector $x \in \C^N$, we have\footnote{We stress
    that in this work, we crucially use the fact that the definition
    of RIP that we use is based on the Euclidean ($\ell_2$) norm.}
  \[
  (1-\delta) \| x \|_2^2 \leq \| Mx \|_2^2 \leq (1+\delta) \| x
  \|_2^2.
  \]
  Generally we think of $\delta$ as a small positive constant, say
  $\delta = 1/2$.
\end{defn}

Since we will be working with list decoding radii close to $1-1/q$, we
derive a simplified expression for the Johnson bound in this regime;
namely, the following:

\begin{thm}
  \label{thm:avg-johnson-simple}
  Let $\cC \subseteq [q]^n$ be a $q$-ary code and $L \ge 2$ an
  integer. If the average pairwise relative Hamming distance of every
  subset of $L$ codewords of $\cC$ is at least $(1-1/q)(1-\eps)$, then
  $\cC$ is $((1-1/q)(1-\sqrt{\eps+1/L}), L-1)$-list decodable.
\end{thm}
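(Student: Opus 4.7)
The plan is to derive this from Theorem~\ref{thm:avg-johnson} by a direct algebraic simplification of the Johnson radius $J_q$ when the distance is close to $1-1/q$. Set $\delta := (1-1/q)(1-\eps) = \frac{q-1}{q}(1-\eps)$. Since the average pairwise relative distance of every $L$-subset of codewords is at least $\delta$, Theorem~\ref{thm:avg-johnson} immediately yields that $\cC$ is $(J_q(\delta - \delta/L), L-1)$-list decodable, so it suffices to show
\[ J_q(\delta - \delta/L) \;\geq\; (1-1/q)\bigl(1-\sqrt{\eps+1/L}\bigr). \]

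First I would compute $\delta - \delta/L = \frac{q-1}{q}(1-\eps)(1-1/L)$, so that the argument of the square root in the definition of $J_q$ becomes
\[ 1 - \frac{q(\delta-\delta/L)}{q-1} \;=\; 1 - (1-\eps)(1-1/L) \;=\; \eps + \frac{1}{L} - \frac{\eps}{L} \;\leq\; \eps + \frac{1}{L}. \]
Plugging this into the definition of $J_q$ gives
\[ J_q(\delta - \delta/L) \;=\; \frac{q-1}{q}\Bigl(1 - \sqrt{\eps + 1/L - \eps/L}\Bigr) \;\geq\; \frac{q-1}{q}\bigl(1 - \sqrt{\eps + 1/L}\bigr), \]
which is exactly the claimed bound.

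There is no real obstacle: once Theorem~\ref{thm:avg-johnson} is invoked, the statement reduces to the elementary identity $1-(1-\eps)(1-1/L) \leq \eps + 1/L$. The only thing worth checking is that the argument of the square root lies in $[0,1]$ (so that $J_q$ is well-defined here); this is automatic since $(1-\eps)(1-1/L) \in [0,1]$ whenever $\eps \in [0,1]$ and $L \geq 2$, which is the only regime of interest (if $\eps + 1/L \geq 1$ the conclusion is vacuous). I would therefore present the proof as a two-line computation followed by a citation of Theorem~\ref{thm:avg-johnson}.
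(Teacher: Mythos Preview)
Your proposal is correct and follows essentially the same approach as the paper: invoke Theorem~\ref{thm:avg-johnson} with $\delta = (1-1/q)(1-\eps)$, simplify $1 - \frac{q}{q-1}\delta(1-1/L) = \eps + 1/L - \eps/L$, and drop the $-\eps/L$ term. The paper's proof is the same two-line computation; your extra remark about the domain of $J_q$ is a harmless sanity check not spelled out there.
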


\begin{proof}
  The proof is nothing but a simple manipulation of the bound given by
  Theorem~\ref{thm:avg-johnson}. Let $\delta := (1-1/q)(1-\eps)$.
  Theorem~\ref{thm:avg-johnson} implies that $\cC$ is
  $(J_q(\delta(1-1/L)), L-1)$-list decodable. Now,
  \begin{align*}
    J_q(\delta(1-1/L)) &= \frac{q-1}{q} \left( 1 - \sqrt{ 1 -
        \frac{q}{q-1} \cdot \frac{q-1}{q}\big(1-\eps
        \big)\Big(1-\frac{1}{L} \Big) } \right) \\ %
    &= \frac{q-1}{q} \left( 1 - \sqrt{ \eps+\frac{1}{L}-\frac{\eps}{L}
      } \right) \geq \frac{q-1}{q} \left( 1 - \sqrt{ \eps+\frac{1}{L}
      } \right). %
    \tag*{\qedhere}
  \end{align*}
\end{proof}

In order to prove lower bounds on average distance of random linear
codes, we will use the simplex encoding of vectors
(Definition~\ref{def:simplex}), along with the following simple
geometric lemma.

\begin{lem} \label{lem:avgDist} Let $c_1, \ldots, c_L \in [q]^n$ be
  $q$-ary vectors. Then, the average pairwise distance $\delta$
  between these vectors satisfies
  \[
  \delta := \sum_{1 \leq i<j \leq L} \delta(c_i, c_j)/\binom{L}{2} =
  \frac{L^2(q-1)n - \Big\| \sum_{i \in [L]} \Char(c_i) \Big\|_2^2}{q
    L(L-1) n}.
  \]
\end{lem}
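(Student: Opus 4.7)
The plan is a direct expansion of the squared Euclidean norm followed by substitution of the two inner product identities already established for the simplex encoding. There is no real obstacle here; the lemma is an algebraic rearrangement.

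First I would expand
\[
\Big\|\sum_{i\in[L]}\Char(c_i)\Big\|_2^2
= \sum_{i=1}^L \innr{\Char(c_i),\Char(c_i)} + 2\sum_{1\le i<j\le L}\innr{\Char(c_i),\Char(c_j)}.
\]
The diagonal terms are easy: applying the simplex inner product formula coordinate-wise gives $\innr{\Char(c_i),\Char(c_i)} = (q-1)n$ for each $i$, contributing $L(q-1)n$. For the off-diagonal terms I would invoke equation~\eqref{eq:simplex-enc-dist}, which yields $\innr{\Char(c_i),\Char(c_j)} = (q-1)n - q\,\delta(c_i,c_j)\,n$ for every pair $i\ne j$.

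Summing, the squared norm becomes
\[
L(q-1)n + 2\sum_{1\le i<j\le L}\bigl((q-1)n - q\,\delta(c_i,c_j)\,n\bigr)
= L^2(q-1)n - 2qn\sum_{1\le i<j\le L}\delta(c_i,c_j).
\]
By the definition of the average pairwise distance, $\sum_{i<j}\delta(c_i,c_j) = \binom{L}{2}\delta = \tfrac{L(L-1)}{2}\,\delta$, so the right-hand side equals $L^2(q-1)n - qL(L-1)n\,\delta$. Solving for $\delta$ gives exactly the claimed formula. The only ``check'' worth flagging is that the diagonal identity $\innr{\Char(c_i),\Char(c_i)}=(q-1)n$ follows from the $x=y$ case of the per-coordinate simplex formula, applied $n$ times — which is already implicit in how \eqref{eq:simplex-enc-dist} is used elsewhere.
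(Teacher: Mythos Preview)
Your proposal is correct and is essentially identical to the paper's own proof: both expand $\|\sum_i \Char(c_i)\|_2^2$ as a double sum of inner products, apply \eqref{eq:simplex-enc-dist}, and solve for $\delta$. The only cosmetic difference is that the paper sums over all ordered pairs $(i,j)$ at once (using $\delta(c_i,c_i)=0$ for the diagonal), whereas you separate out the diagonal terms explicitly.
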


\begin{proof}
  The proof is a simple application of
  \eqref{eq:simplex-enc-dist}. The second norm on the right hand side
  can be expanded as
  \begin{eqnarray*}
    \Big\| \sum_{i \in [L]} \Char(c_i) \Big\|_2^2 &=& \sum_{i, j \in [L]} \innr{\Char(c_i), \Char(c_j)} \\
    &\stackrel{\eqref{eq:simplex-enc-dist}}{=}& \sum_{i, j \in [L]} \Big((q-1)n -qn\delta(c_i,c_j) \Big) \\
    &=& L^2 (q-1)n - 2qn \sum_{1 \leq i < j \leq L} \delta(c_i,c_j) \\
    &=& L^2 (q-1)n - 2qn \binom{L}{2} \delta,
  \end{eqnarray*}
  and the bound follows.
\end{proof}

Now we are ready to formulate our reduction from RIP-2 to average
distance of codes.

\begin{lem} \label{lem:RIPtoDist} Let $\cC \subseteq [q]^n$ be a code
  and suppose $\Char(\cC)/\sqrt{(q-1)n}$ satisfies RIP-2 of order $L$
  with constant $1/2$. Then, the average pairwise distance between
  every $L$ codewords of $\cC$ is at least $
  \big(1-\frac{1}{q}\big)\big(1 - \frac{1}{2(L-1)} \big)$.
\end{lem}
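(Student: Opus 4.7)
\medskip

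The plan is to directly combine the RIP-2 hypothesis with the identity from Lemma~\ref{lem:avgDist}. The bridge between these two statements is the observation that the complex vector $\sum_{i \in [L]} \Char(c_i)$ appearing in Lemma~\ref{lem:avgDist} can be written as $\sqrt{(q-1)n} \cdot M x$, where $M := \Char(\cC)/\sqrt{(q-1)n}$ and $x \in \{0,1\}^{|\cC|}$ is the indicator vector of the chosen $L$ codewords.

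First, I would fix an arbitrary subset $\{c_1,\ldots,c_L\} \subseteq \cC$ and let $x \in \{0,1\}^{|\cC|}$ be its indicator, so that $x$ is $L$-sparse with $\|x\|_2^2 = L$. By construction of $M$, we have $M x = \frac{1}{\sqrt{(q-1)n}} \sum_{i=1}^L \Char(c_i)$. Applying the upper bound of the RIP-2 hypothesis with constant $\delta = 1/2$ yields
\[
\Big\| \sum_{i=1}^L \Char(c_i) \Big\|_2^2 \;=\; (q-1)n \cdot \|Mx\|_2^2 \;\le\; \tfrac{3}{2}\, L\,(q-1)n.
\]

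Next, I would plug this upper bound into the identity from Lemma~\ref{lem:avgDist}: the average pairwise distance $\delta$ of $\{c_1,\ldots,c_L\}$ satisfies
\[
\delta \;\ge\; \frac{L^2(q-1)n - \tfrac{3}{2}L(q-1)n}{q L(L-1) n} \;=\; \frac{(q-1)(2L-3)}{2q(L-1)} \;=\; \Bigl(1-\tfrac{1}{q}\Bigr)\Bigl(1 - \tfrac{1}{2(L-1)}\Bigr),
\]
which is exactly the bound claimed in the lemma. Since this argument applies to every $L$-element subset of $\cC$, the proof is complete.

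There is no real obstacle here: the only things to verify carefully are that the simplex encoding makes the columns of $M$ unit-norm (so that the indicator-vector substitution lines up correctly with the sparsity count), and the small algebraic simplification at the end matching $(2L-3)/(2(L-1))$ with $1 - 1/(2(L-1))$. The substantive content sits in Lemma~\ref{lem:avgDist} and in the RIP-2 hypothesis itself; this lemma is the thin but essential bridge that converts the analytic RIP-2 bound on $\|Mx\|_2^2$ into a combinatorial average-distance guarantee, which can then be fed into Theorem~\ref{thm:avg-johnson-simple} to obtain list decodability.
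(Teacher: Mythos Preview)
Your proof is correct and essentially identical to the paper's own argument: both take the indicator vector $x$ of an $L$-subset, apply the RIP-2 upper bound to get $\|\sum_i \Char(c_i)\|_2^2 \le \tfrac{3}{2}L(q-1)n$, and then substitute into the identity of Lemma~\ref{lem:avgDist} to obtain the stated average-distance bound. The only cosmetic difference is whether $M$ denotes $\Char(\cC)$ or its normalized version, which you handle correctly.
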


\begin{proof} 
  Consider any set $S$ of $L$ codewords, and the real vector $x \in
  \R^{|\cC|}$ with entries in $\{0,1\}$ that is exactly supported on
  the positions indexed by the codewords in $S$.  Obviously,
  $\|x\|_2^2 = L$. Thus, by the definition of RIP-2
  (Definition~\ref{def:RIP}), we know that, defining $M :=
  \Char(\cC)$,
  \begin{equation} \label{eqn:lem:RIPtoDist} \| Mx \|_2^2 \leq
    3L(q-1)n/2.
  \end{equation}
  Observe that $Mx = \sum_{i \in [L]} \Char(c_i)$.  Let $\delta$ be
  the average pairwise distance between codewords in $S$.  By
  Lemma~\ref{lem:avgDist} we conclude that
  \begin{align*}
    \delta &=
    \frac{L^2(q-1)n - \Big\| \sum_{i \in [L]} \Char(c_i) \Big\|_2^2}{2q\binom{L}{2} n} \\
    &\stackrel{\eqref{eqn:lem:RIPtoDist}}{\geq}
    \frac{(L^2 - 1.5 L)(q-1)n}{q L(L-1) n} \\
    &= \frac{q-1}{q} \Big( 1 - \frac{1}{2(L-1)}\Big).  %
    \tag*{\qedhere}
  \end{align*}
\end{proof}

We remark that the exact choice of the RIP constant in the above
result is arbitrary, as long as it remains an absolute
constant. Contrary to applications in compressed sensing, for our
application it also makes sense to have RIP-2 with constants larger
than one, since the proof only requires the upper bound in
Definition~\ref{def:RIP}.

By combining Lemma~\ref{lem:RIPtoDist} with the simplified Johnson
bound of Theorem~\ref{thm:avg-johnson-simple}, we obtain the following
corollary.

\begin{thm} \label{thm:RIPtoLD} Let $\cC \subseteq [q]^n$ be a code
  and suppose $\Char(\cC)/\sqrt{(q-1)n}$ satisfies RIP-2 of order $L$
  with constant $1/2$. Then $\cC$ is
  $\Big(\big(1-\frac{1}{q}\big)\big(1-\sqrt{\frac{1.5}{L-1}}\big),
  L-1\Big)$-list decodable. \qed
\end{thm}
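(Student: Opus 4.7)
The plan is to derive Theorem~\ref{thm:RIPtoLD} as a direct composition of Lemma~\ref{lem:RIPtoDist} (RIP-2 implies an average-distance lower bound) and Theorem~\ref{thm:avg-johnson-simple} (the simplified average-distance Johnson bound). No new ideas are required; the entire content is already packaged in those two statements, so the only work is matching parameters and performing a trivial arithmetic simplification under the square root.

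First I would invoke Lemma~\ref{lem:RIPtoDist}: the hypothesis that $\Char(\cC)/\sqrt{(q-1)n}$ satisfies RIP-2 of order $L$ with constant $1/2$ immediately yields that every subset of $L$ codewords of $\cC$ has average pairwise relative Hamming distance at least $\bigl(1-\tfrac{1}{q}\bigr)\bigl(1-\tfrac{1}{2(L-1)}\bigr)$. Setting $\eps := \tfrac{1}{2(L-1)}$ writes this lower bound in the form $(1-1/q)(1-\eps)$ demanded as input by Theorem~\ref{thm:avg-johnson-simple}. Applying that theorem then gives $(1-1/q)\bigl(1-\sqrt{\eps+1/L}\bigr), L-1)$-list decodability of $\cC$.

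It remains to upper-bound $\eps+1/L$ by $\tfrac{3}{2(L-1)}$ so that the radius takes the form stated in the theorem. Computing, $\tfrac{1}{2(L-1)}+\tfrac{1}{L} = \tfrac{3L-2}{2L(L-1)} \le \tfrac{3L}{2L(L-1)} = \tfrac{3}{2(L-1)} = \tfrac{1.5}{L-1}$, where the inequality is trivially equivalent to $3L-2 \le 3L$. Since the square-root function is monotone, replacing $\eps+1/L$ by this larger quantity only weakens the decoding radius, and the final conclusion $\bigl((1-1/q)(1-\sqrt{1.5/(L-1)}),\, L-1\bigr)$-list decodability follows.

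There is no substantive obstacle: the proof is essentially a two-line chain of citations plus the elementary estimate above. The reason for loosening $\eps + 1/L$ to $1.5/(L-1)$ is purely cosmetic, producing a single unified denominator in the final bound that is convenient for the subsequent application in which $L$ will be chosen of order $1/\eps^2$.
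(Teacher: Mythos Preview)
Your proposal is correct and is precisely the argument the paper intends: the theorem is stated as an immediate corollary of Lemma~\ref{lem:RIPtoDist} composed with Theorem~\ref{thm:avg-johnson-simple}, and you have carried out exactly that chaining together with the elementary estimate $\tfrac{1}{2(L-1)}+\tfrac{1}{L}\le \tfrac{1.5}{L-1}$ needed to match the stated radius.
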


The matrix $\Char(\cC)$ for a linear code $\cC \subseteq \F_q^n$ has a
special form. It is straightforward to observe that, when $q=2$, the
matrix is an incomplete Hadamard-Walsh transform matrix with $2^\tk$
columns, where $\tk$ is the dimension of the code. In general
$\Char(\cC)$ turns out to be related to a Discrete Fourier Transform
matrix. Specifically, we have the following observation.

\begin{observ} \label{obs:DFT} Let $\cC \subseteq \F_q^n$ be an $[n,
  \tk]$ linear code with a generator matrix $G \in \F_q^{\tk \times
    n}$, and define $N := q^\tk$.  Consider the matrix of \emph{linear
    forms} $\lin \in \F_q^{N \times N}$ with rows and columns indexed
  by elements of $\F_q^\tk$ and entries defined by
  \[
  \lin(x,y) := \innr{x,y},
  \]
  where $\innr{\cdot, \cdot}$ is the finite-field inner product over
  $\F_q^\tk$.  Let $\Rows \subseteq \F_q^\tk$ be the multiset of
  columns of $G$.  Then, $\Char(\cC) = \Char(\lin_{\Rows})$ (recall,
  from Definition~\ref{def:simplex}, that the former simplex encoding
  is applied to the matrix enumerating the codewords of $\cC$, while
  the latter is applied to the entries of a submatrix of $\lin$).
  
  When $G$ is uniformly random, $\cC$ becomes a random linear code and
  $\Char(\cC)$ can be sampled by the following process: Arrange $n$
  uniformly random rows of $\lin$, sampled independently and with
  replacement, as rows of a matrix $M$.  Then, replace each entry of
  $M$ by its simplex encoding, seen as a column vector in
  $\C^{q-1}$. The resulting complex matrix is $\Char(\cC)$.
\end{observ}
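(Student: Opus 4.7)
The plan is to verify the observation by directly unrolling the definitions: the $y$-th column of $\lin_\Rows$ will turn out to be literally the codeword of $\cC$ associated to the message vector $y$, and then applying the simplex encoding entrywise to both matrices gives the claimed equality. No genuine obstacle is anticipated, as this is a bookkeeping statement.

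First, index codewords of $\cC$ by messages $y \in \F_q^\tk$ via $y \mapsto yG$ (treating $y$ as a row vector). The $i$-th coordinate of the codeword $yG$ is exactly $\innr{y, G_i}$, where $G_i$ denotes the $i$-th column of $G$. Now unpack $\lin_\Rows$: by the notational convention stated in the introduction, $\lin_\Rows$ is the $n \times N$ matrix whose rows are the rows of $\lin$ indexed (with multiplicity and in order) by the elements of $\Rows$, i.e., by $G_1, \ldots, G_n$. Hence the $(i,y)$ entry of $\lin_\Rows$ equals $\lin(G_i, y) = \innr{G_i, y} = \innr{y, G_i}$ by symmetry of the $\F_q^\tk$ inner product. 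Therefore the $y$-th column of $\lin_\Rows$ \emph{is} the codeword $yG$, viewed as an element of $\F_q^n$.

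Next, apply the simplex encoding. By definition, $\Char$ acting on an $\F_q$-matrix replaces each entry with its length-$(q-1)$ simplex vector, and $\Char(\cC)$ is defined so that its column indexed by $c \in \cC$ equals $\Char(c)$. Since the $y$-th column of $\lin_\Rows$ is the codeword $yG$, the $y$-th column of $\Char(\lin_\Rows)$ is $\Char(yG)$, which is precisely the column of $\Char(\cC)$ indexed by the codeword $yG$. Identifying columns via $y \leftrightarrow yG$ (using the standard convention that enumerates codewords by messages), this proves $\Char(\cC) = \Char(\lin_\Rows)$.

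For the sampling claim, observe that when $G$ is drawn uniformly from $\F_q^{\tk \times n}$, its $n$ columns $G_1, \ldots, G_n$ are i.i.d.\ uniform elements of $\F_q^\tk$. By the identification above, this is exactly the distribution of a multiset of $n$ independent uniformly random elements of $\F_q^\tk$, which selects $n$ rows of $\lin$ uniformly and independently with replacement. Applying $\Char$ entrywise to the resulting submatrix yields $\Char(\lin_\Rows) = \Char(\cC)$ by the first part, completing the proof.
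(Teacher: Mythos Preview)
Your proposal is correct and is exactly the direct definitional unpacking the paper intends; indeed, the paper states this as an \emph{observation} without supplying any proof, so your write-up simply makes explicit what the authors regard as immediate. One very minor point: in the first part the hypothesis ``$[n,\tk]$ linear code'' guarantees $G$ has full rank, so the map $y\mapsto yG$ is a bijection and the column identification $y\leftrightarrow yG$ is unambiguous; in the random-$G$ part this need not hold, but the paper itself glosses over this (and later, in the proof of Theorem~\ref{thm:main}, observes that the RIP-2 conclusion forces $G$ to have full rank anyway).
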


The RIP-2 condition for random complex matrices arising from random
linear codes is proved in Theorem~\ref{thm:RIP} of
Section~\ref{sec:RIP}.  We now combine this theorem with the preceding
results of this section to prove our main theorem on list decodability
of random linear codes.

\begin{thm}[Main] \label{thm:main} Let $q$ be a prime power, and let
  $\eps, \gamma > 0$ be constant parameters.  Then for all large
  enough integers $n$, a random linear code $\cC \subseteq \F_q^n$ of
  rate $R$, for some
  \[R \gsim \frac{\eps^2}{\log(1/\gamma) \log^3(q/\eps) \log q}\] is
  $((1-1/q)(1-\eps), O(1/\eps^2))$-list decodable with probability at
  least $1-\gamma$.
\end{thm}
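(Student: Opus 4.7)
The plan is to chain together the results already assembled in this section with the RIP bound from Section~\ref{sec:RIP}. Given target error fraction $(1-1/q)(1-\eps)$ and failure probability $\gamma$, I first fix the intended list size as $L := \lceil C/\eps^2\rceil + 1$ for a sufficiently large absolute constant $C$, so that $\sqrt{1.5/(L-1)} \le \eps$. With this choice, the conclusion of Theorem~\ref{thm:RIPtoLD} is strong enough to yield $((1-1/q)(1-\eps), L-1)$-list decodability, provided the matrix $\Char(\cC)/\sqrt{(q-1)n}$ satisfies RIP-2 of order $L$ with constant $1/2$.

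The next step is to invoke Observation~\ref{obs:DFT} to reinterpret $\Char(\cC)$ when $\cC$ is a random linear code of dimension $\tk = Rn$: it is, after the simplex encoding of entries, the matrix obtained by sampling $n$ rows uniformly at random and independently from the $N \times N$ linear forms matrix $\lin$ over $\F_q^{\tk}$, where $N = q^{\tk}$. This puts the required RIP-2 statement exactly in the setting of random row subsamples of DFT-type matrices analyzed in Section~\ref{sec:RIP}.

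I then apply Theorem~\ref{thm:RIP} with sparsity order $L$ and failure probability $\gamma$ to conclude RIP-2 of order $L$ with constant $1/2$, provided the number of sampled rows satisfies a bound of the form
\[
n \;\gsim\; L \cdot \log^3 L \cdot \log N \cdot \log(1/\gamma),
\]
where any additional $q$-dependent factor coming from the simplex encoding can be absorbed into the constants. Substituting $\log N = \tk \log q$ and $L = \Theta(1/\eps^2)$ and solving for the rate $R = \tk/n$ yields
\[
R \;\gsim\; \frac{\eps^2}{\log^3(q/\eps) \cdot \log q \cdot \log(1/\gamma)},
\]
matching the bound claimed in the theorem.

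The essential difficulty is hidden in Theorem~\ref{thm:RIP}, which is a substantial probabilistic/analytic argument involving a Gaussian-process bound and deferred to Section~\ref{sec:RIP}. Modulo that RIP estimate, the proof of Theorem~\ref{thm:main} is a direct composition of Theorem~\ref{thm:RIPtoLD}, Observation~\ref{obs:DFT}, and an arithmetic manipulation converting the sample-complexity bound $n \gsim L \log^3 L \log N \log(1/\gamma)$ into the claimed lower bound on $R$.
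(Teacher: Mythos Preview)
Your proposal is correct and follows essentially the same route as the paper: fix $L=\Theta(1/\eps^2)$ so that Theorem~\ref{thm:RIPtoLD} yields the desired radius, use Observation~\ref{obs:DFT} to cast $\Char(\cC)$ as a row-subsampled $\lin$ matrix, invoke Theorem~\ref{thm:RIP} with $\delta=1/2$ to get the sample bound $n \lsim \log(1/\gamma)\, L \log N \log^3(qL)$, and solve for $R=\log N/(n\log q)$. The one small point you elide and the paper makes explicit is that the random generator matrix actually has rank $Rn$ on the good event (so the rate really is $R$): this follows because RIP-2 of any order $\ge 2$ forces all columns of $\Char(\cC)$ to be distinct, hence $x\mapsto xG$ is injective.
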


\begin{proof}
  Let $\cC \subseteq \F_q^n$ be a uniformly random linear code
  associated to a random $Rn \times n$ generator matrix $G$ over
  $\F_q$, for a rate parameter $R \leq 1$ to be determined
  later. Consider the random matrix $M = \Char(\cC) =
  \Char(\lin_\Rows)$ from Observation~\ref{obs:DFT}, where $|\Rows| =
  n$. Recall that $M$ is a $(q-1)n \times N$ complex matrix, where $N
  = q^{Rn}$.  Let $L := 1 + \lceil 1.5/\eps^2 \rceil =
  \Theta(1/\eps^2)$.  By Theorem~\ref{thm:RIP}, for large enough $N$
  (thus, large enough $n$) and with probability $1-\gamma$, the matrix
  $M/\sqrt{(q-1)n}$ satisfies RIP-2 of order $L$ with constant $1/2$,
  for some choice of $|\Rows|$ bounded by
  \begin{equation} \label{eqn:main-bound} n = |\Rows| \lsim
    \log(1/\gamma) L \log(N) \log^3 (qL).
  \end{equation} 
  Suppose $n$ is large enough and satisfies \eqref{eqn:main-bound} so
  that the RIP-2 condition holds.  By Theorem~\ref{thm:RIPtoLD}, this
  ensures that the code $\cC$ is $((1-1/q)(1-\eps), O(1/\eps^2))$-list
  decodable with probability at least $1-\gamma$.
  
  It remains to verify the bound on the rate of $\cC$.  We observe
  that, whenever the RIP-2 condition is satisfied, $G$ must have rank
  exactly $Rn$, since otherwise, there would be distinct vectors $x,
  x' \in \F_q^{Rn}$ such that $xG=x'G$. Thus in that case, the columns
  of $M$ corresponding to $x$ and $x'$ become identical, implying that
  $M$ cannot satisfy RIP-2 of any nontrivial order.  Thus we can
  assume that the rate of $\cC$ is indeed equal to $R$.  Now we have
  \begin{eqnarray*}
    R &=& \log_q |\cC|/n = \log N / (n \log q) \\
    &\stackrel{\eqref{eqn:main-bound}}{\gsim}& \frac{\log N}{\log(1/\gamma) L \log(N) \log^3 (qL) \log q}.
  \end{eqnarray*}
  Substituting $L=\Theta(1/\eps^2)$ into the above expression yields
  the desired bound.
\end{proof}

% ==============================================================================
% ==============================================================================

\section{Restricted isometry property of DFT-based
  matrices} \label{sec:RIP}

In this section, we prove RIP-2 for random incomplete Discrete Fourier
Transform matrices. Namely, we prove the following theorem.

\begin{thm} \label{thm:RIP} Let $\Rows$ be a random multiset of rows
  of $\lin$, where $|\Rows|$ is fixed and each element of $\Rows$ is
  chosen uniformly at random, and independently with replacement.
  Then, for every $\delta, \gamma > 0$, and assuming $N \geq
  N_0(\delta, \gamma)$, with probability at least $1-\gamma$ the
  matrix $\Char(\lin_\Rows)/\sqrt{(q-1)|\Rows|}$ (with $(q-1)|\Rows|$
  rows) satisfies RIP-2 of order $k$ with constant $\delta$ for a
  choice of $|\Rows|$ satisfying
  \begin{equation} \label{eqn:RIP:rows} |\Rows| \lsim
    \frac{\log(1/\gamma)}{\delta^2} k \log(N) \log^3(qk).
  \end{equation}
\end{thm}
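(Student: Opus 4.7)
The plan is to follow the framework of Rudelson--Vershynin, which reduces RIP-2 of a row-subsampled Fourier-type matrix to bounding the expected supremum of a Gaussian process, but to replace their key $\ell_\infty$-based auxiliary metric with a finite $\ell_{2p}$-type norm in order to shave off the $\log\log N$ factor. Write $M := \Char(\lin_\Rows)/\sqrt{(q-1)|\Rows|}$, let $v_1,\dots,v_m$ denote the (simplex-encoded) sampled rows, and let $\Ball$ be the set of $k$-sparse unit vectors in $\C^N$. Verifying RIP-2 amounts to bounding $\E\sup_{x\in\Ball}|\|Mx\|_2^2 - \|x\|_2^2|$. Via standard symmetrization plus Rademacher-to-Gaussian comparison, this reduces (conditional on the $v_i$'s) to bounding $\E_g \sup_{x\in\Ball}|G_x|$, where $G_x := \sum_{i=1}^m g_i |\innr{v_i,x}|^2$ with $g_i$ i.i.d.\ standard Gaussians.

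The intrinsic pseudo-metric of this Gaussian process is
\[
\|x-x'\|_X^2 = \sum_{i=1}^m \bigl(|\innr{v_i,x}|^2 - |\innr{v_i,x'}|^2\bigr)^2 \lsim \sum_{i=1}^m |\innr{v_i,x-x'}|^2 \bigl(|\innr{v_i,x}|^2 + |\innr{v_i,x'}|^2\bigr).
\]
By Dudley's inequality, $\E\sup_{x\in\Ball} G_x \lsim \int_0^\infty \sqrt{\log\mathcal{N}(\Ball,X,u)}\, du$. The crucial step is to factor $X$ via H\"older's inequality with conjugate exponents $(p,p/(p-1))$ where $p \asymp \log k$ is large but \emph{finite} (rather than $p=\infty$, which is what yields the max norm in Rudelson--Vershynin). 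This gives
\[
\|x-x'\|_X^2 \lsim \Bigl(\sum_i (|\innr{v_i,x}|^2 + |\innr{v_i,x'}|^2)^p\Bigr)^{1/p} \cdot \Bigl(\sum_i |\innr{v_i,x-x'}|^{2p/(p-1)}\Bigr)^{(p-1)/p},
\]
so the first factor bounds the process diameter and the second defines a new scale-relevant auxiliary metric $X'$ that is essentially an $\ell_{2p/(p-1)}$ norm on the sampled Fourier coefficients.

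Next I would bound the two pieces. For the diameter, moment estimates (using that $|\innr{v_i,x}|$ is a bounded sum of $k$ characters of $x$) show that $\bigl(\sum_i |\innr{v_i,x}|^{2p}\bigr)^{1/p}$ is controlled by $m \cdot \mathrm{poly}(p)$ with high probability over the sample. For the covering numbers $\mathcal{N}(\Ball,X',u)$, I would split into two regimes: at small scales, apply a Maurey/empirical-method argument giving an entropy bound polynomial in $p$ and $k$; at large scales, use a volumetric/union bound over $k$-sparse supports, which contributes the final $\binom{N}{k}$ factor responsible for the single $\log N$ in \eqref{eqn:RIP:rows}. Feeding these estimates into Dudley's integral and optimizing with $p \asymp \log k$, the dyadic sum over scales contributes three logarithmic factors in $k$ (rather than $\log^2 k \cdot (\log k + \log\log N)$ as in prior work), because no $\ell_\infty$ norm appears and hence no $\log m = \log(k\log N)$ enters. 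Finally, convert the expectation bound to a high-probability statement via concentration for suprema of empirical processes (Talagrand's inequality), absorbing a $\log(1/\gamma)$ factor into $|\Rows|$.

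The main obstacle is the fourth step: carefully estimating the covering numbers of $\Ball$ under the new $\ell_{2p/(p-1)}$-type metric $X'$, and in particular selecting $p \asymp \log k$ so that the small-scale entropy bound (which grows polynomially in $p$) and the large-scale sparsity bound balance tightly. Getting this trade-off right is precisely what replaces the $\log^2 k (\log k + \log\log N)$ factor of Rudelson--Vershynin with $\log^3 k$, and is essential for the list-decoding application, since only the absence of the $\log\log N$ factor makes the associated code rate nonvanishing.
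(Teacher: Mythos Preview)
Your overall plan matches the paper: symmetrize to a Gaussian process, apply Dudley's inequality, factor the process metric via H\"older, estimate covering numbers for the resulting auxiliary metric by combining a volumetric bound with Maurey's empirical method, and optimize the H\"older exponent at order $\log k$. You also correctly identify the new idea as replacing Rudelson--Vershynin's $\ell_\infty$-based auxiliary metric by a large finite norm.

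The gap is that your H\"older factorization assigns the two exponents to the wrong sides. In the paper (as in Rudelson--Vershynin), the \emph{small} exponent $r=1+\eta$ sits on the $(x+x')$ factor and the \emph{large} exponent $s=1+1/\eta$ defines the auxiliary metric $X'$ on the $(x-x')$ factor; you do the opposite. The small exponent on the $(x+x')$ side is exactly what makes the self-bounding recursion work: since $r\approx 1$, one can write $\bigl(\sum_t(\sum_\alpha\innr{M_{t,\alpha},x+x'}^2)^r\bigr)^{1/r}\le (4qk)^\eta\cdot 4\cE'_\Rows$, and then $\E_\Rows\cE'_\Rows\le\cE+|\Rows|(q-1)$ feeds back into the very quantity $\cE$ being bounded. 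With your assignment the ``diameter'' becomes $\bigl(\sum_i|\innr{v_i,x}|^{2p}\bigr)^{1/p}$ for large $p$, which is essentially $\max_i|\innr{v_i,x}|^2$; controlling this \emph{uniformly over $x\in\Ball$} is not a moment computation (moments handle only fixed $x$), and the worst-case estimate picks up a factor $m^{1/p}=m^{1/\log k}$ that reintroduces a $\log\log N$-type dependence once $m\asymp k\log N$. Dually, it is the large exponent $s$ on the metric side that makes Maurey's covering bound useful: it yields $\log N_{X'}(u)\lsim |\Rows|^{1/s}(\log N)qks/u^2$, and only large $s$ keeps $|\Rows|^{1/s}$ bounded. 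A smaller slip: Maurey's $1/u^2$-type entropy bound is used at \emph{large} scales and the volumetric bound at \emph{small} scales, the reverse of what you wrote.
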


The proof extends and closely follows the original proof of Rudelson
and Vershynin \cite{ref:RV08}. However we modify the proof at a
crucial point to obtain a strict improvement over their original
analysis which is necessary for our list decoding application. We
present our improved analysis in this section.

\begin{proof}[Proof (of Theorem~\ref{thm:RIP})]
  Let $M := \Char(\lin_\Rows)$.  Each row of $M$ is indexed by an
  element of $\Rows$ and some $\alpha \in \F_q^\ast$ (recall that
  $\Rows \subseteq \F_q^\tk$, where $N = q^\tk$).  Denote the row
  corresponding to $t \in \Rows$ and $\alpha \in \F_q^\ast$ by $M_{t,
    \alpha}$, and moreover, denote the set of $k$-sparse unit
  vectors in $\C^N$ by $\Ball$.

  In order to show that $M/\sqrt{(q-1)|\Rows|}$ satisfies RIP of order
  $k$, we need to verify that for any $x = (x_1, \ldots, x_N) \in
  \Ball$,
  \begin{equation} \label{eqn:RIP} |\Rows|(q-1)(1-\delta) \leq \| Mx
    \|_2^2 \leq |\Rows|(q-1)(1+\delta).
  \end{equation}
  In light of Proposition~\ref{prop:RIPreal}, without loss of
  generality we can assume that $x$ is real-valued (since the inner
  product between any pair of columns of $M$ is real-valued).

  For $i \in \F_q^n$, denote the $i$th column of $M$ by $M^i$.  For $x
  = (x_1, \ldots, x_N) \in \Ball$, define the random variable
  \begin{eqnarray*}
    \Delta_x &:=& \| Mx \|_2^2 - |\Rows|(q-1) \\
    &=& \sum_{\substack{i, j \in \supp(x) \\ i \neq j}} x_i x_j \innr{M^i, M^j},
  \end{eqnarray*}
  where the second equality holds since each column of $M$ has
  $\ell_2$ norm $\sqrt{(q-1)|\Rows|}$ and $\|x\|_2 = 1$.  Thus, the
  RIP-condition \eqref{eqn:RIP} is equivalent to
  \begin{equation} \label{eqn:RIP:equiv} \Delta := \sup_{x \in \Ball}
    |\Delta_x| \leq \delta |\Rows|(q-1).
  \end{equation}
  Recall that $\Delta$ is a random variable depending on the
  randomness in $\Rows$. The proof of the RIP condition involves two
  steps. First, bounding $\Delta$ in expectation, and second, a tail
  bound.  The first step is proved, in detail, in the following lemma.
  
  \begin{lem} \label{lem:RIP:expectation} Let $\delta' > 0$ be a real
    parameter. Then, $\E[\Delta] \leq \delta' |\Rows| (q-1)$ for a
    choice of $|\Rows|$ bounded as follows:
    \begin{equation*}
      |\Rows| \lsim k \log(N) \log^3(qk) / {\delta'}^2.  
    \end{equation*}
  \end{lem}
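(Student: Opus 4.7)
The plan is to follow the Rudelson--Vershynin strategy: after symmetrizing, reduce the problem to bounding the expected supremum of a Gaussian process indexed by $\Ball$, and then bound that supremum by Dudley's entropy integral. The crucial departure from \cite{ref:RV08} is in the comparison metric used to estimate the covering numbers that appear in Dudley's bound.

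\textbf{Symmetrization and passage to a Gaussian process.} First I would split off the diagonal contribution to $\|Mx\|_2^2$, which on $\Ball$ is deterministically $|\Rows|(q-1)$, so that $\Delta_x$ becomes a sum of independent, mean-zero, bounded contributions indexed by $t \in \Rows$. Standard Rademacher symmetrization then yields $\E[\Delta] \lsim \E \sup_{x\in\Ball} \bigl|\sum_{t} \eps_t \|M_t x\|_2^2 \bigr|$, where $M_t$ denotes the $(q-1)$ rows of $M$ associated to the sample $t$. The standard comparison between Rademacher and Gaussian series replaces $\eps_t$ by i.i.d.\ Gaussians $g_t$ at constant cost, reducing the task to controlling $\E\sup_{x\in\Ball} G_x$ for the Gaussian process $G_x := \sum_{t \in \Rows} g_t \|M_t x\|_2^2$.

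\textbf{Dudley integral with an $\ell_p$ comparison metric.} The intrinsic pseudometric of $G$ is $\|x-y\|_X^2 = \sum_t \bigl(\|M_t x\|_2^2 - \|M_t y\|_2^2\bigr)^2$. Polarization gives $\|M_t x\|_2^2-\|M_t y\|_2^2 = \innr{M_t(x-y),\,M_t(x+y)}$, so on $\Ball$ one factor is controlled by $\sup_{z\in 2\Ball}\sup_t \|M_t z\|_2$ while the other is a weighted norm of the sampled Fourier coefficients of $x-y$. Whereas \cite{ref:RV08} bounds the latter factor in the $\ell_\infty$ norm over row samples (which costs a $\log\log N$ factor in Dudley), I would instead use the $\ell_p$ norm with $p \asymp \log(qk)$: for $k$-sparse arguments, $\|\cdot\|_p$ and $\|\cdot\|_\infty$ are equivalent up to an absolute constant, but the covering numbers of $\Ball$ in the resulting comparison metric $X'$ admit much sharper Maurey-type entropy estimates. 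Integrating $\int_0^{\mathrm{diam}} \sqrt{\log \mathcal{N}(\Ball, X', u)}\, du$ via Dudley's theorem then produces a bound of order $\sqrt{(q-1)\,k \log(N) \log^3(qk)}$, with no doubly logarithmic factor.

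\textbf{Expected main obstacle.} Assembling these pieces gives a self-referential inequality of the shape $\E[\Delta] \lsim \sqrt{|\Rows|(q-1)\cdot \E[\Delta]} \cdot \sqrt{k\log(N)\log^3(qk)} + (\text{lower-order terms})$, because the diameter of $\Ball$ under $X$ itself depends on the random operator norm of $M$. Solving the resulting quadratic in $\sqrt{\E[\Delta]}$ and enforcing $\E[\Delta] \leq \delta'|\Rows|(q-1)$ will yield the claimed lower bound on $|\Rows|$. The hardest part is the covering number estimate for the $\ell_p$ metric $X'$: the estimate must be sharp enough to close the Dudley integral, yet amenable to a Maurey-style empirical method, and the exponent $p$ must be tuned so that $k^{1/p} = O(1)$ while the dual $\ell_{p/(p-1)}$ ball remains entropically tractable. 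Executing this calibration \emph{simultaneously} is precisely what eliminates the $\log\log N$ overhead present in \cite{ref:RV08}; I expect this to require a delicate interpolation between volumetric and dual Sudakov arguments.
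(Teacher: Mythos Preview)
Your proposal is correct and follows essentially the same route as the paper: symmetrize to a Gaussian process, replace the $\ell_\infty$ comparison metric of \cite{ref:RV08} by an $\ell_p$-type metric with $p \asymp \log(qk)$, bound covering numbers via a volumetric estimate at small scales and Maurey's empirical method at large scales, and close via the self-referential inequality. The paper implements this via H\"older's inequality with exponents $r=1+\eta$, $s=1+1/\eta$ and then optimizes $\eta \asymp 1/\log(qk)$, which is exactly your $\ell_p$ idea; no dual Sudakov argument turns out to be needed.
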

  \begin{proof}
    We begin by observing that the columns of $M$ are orthogonal in
    expectation; i.e., for any $i, j \in \F_q^n$, we have
    \[
    \E_\Rows \innr{M^i, M^j} = \left\{ \begin{array}{ll}
        |\Rows| (q-1)  & i = j, \\
        0 & i \neq j.
      \end{array} \right.
    \]
    This follows from \eqref{eq:simplex-enc-dist} and the fact that
    the expected relative Hamming distance between the columns of
    $\lin$ corresponding to $i$ and $j$, when $i \neq j$, is exactly
    $1-1/q$.  It follows that for every $x \in \Ball$, $\E[\Delta_x] =
    0$, namely, the stochastic process $\{ \Delta_x \}_{x \in \Ball}$
    is centered.
  
    Recall that we wish to estimate
    \begin{eqnarray}
      \cE &:=& \E_\Rows \Delta \nonumber \\
      &=& \E_\Rows \sup_{x \in \Ball} \left|\sum_{t \in \Rows} \sum_{\alpha \in \F_q^*} \innr{M_{t,\alpha}, x}^2
        - |\Rows|(q-1)\right|. \label{eqn:RIP:w}
    \end{eqnarray}
    The random variables $\innr{M_{t,\alpha}, x}$ and
    $\innr{M_{t',\alpha'},x}$ are independent whenever $t \neq
    t'$. Therefore, we can use the standard symmetrization technique
    on summation of independent random variables in a stochastic
    process (Proposition~\ref{prop:sym}) and conclude from
    \eqref{eqn:RIP:w} that
    \begin{equation} \label{eqn:RIP:c} \cE \lsim \cE_1 := \E_\Rows
      \E_{\mathcal{G}} \sup_{x \in \Ball} \left(\sum_{t \in \Rows} g_t
        \sum_{\alpha \in \F_q^*} \innr{M_{t,\alpha}, x}^2 \right),
    \end{equation}
    where $\mathcal{G} := (g_t)_{t\in \Rows}$ is a sequence of
    independent standard Gaussian random variables.  Denote the term
    inside $\E_\Rows$ in \eqref{eqn:RIP:c} by $\cE_\Rows$; namely,
    \[
    \cE_{\Rows} := \E_{\mathcal{G}} \sup_{x \in \Ball} \left(\sum_{t
        \in \Rows} g_t \sum_{\alpha \in \F_q^*} \innr{M_{t,\alpha},
        x}^2 \right).
    \]

    Now we observe that, for any fixed $\Rows$, the quantity 
    $\cE_\Rows$ defines the supremum of a Gaussian process. 
    The Gaussian process
    $\{ G_x \}_{x \in \Ball}$ induces a pseudo-metric $\| \cdot \|_X$
    on $\Ball$ (and more generally, $\C^N$), where for $x, x' \in
    \Ball$, the (squared) distance is given by
    \begin{eqnarray}
      \| x - x' \|_X^2 &:=& \E_G |G_x - G_{x'}|^2 \nonumber \\
      &=& \sum_{t \in \Rows} \left( \sum_{\alpha \in \F_q^*} 
        \innr{M_{t,\alpha}, x}^2 - \sum_{\alpha \in \F_q^*} 
        \innr{M_{t,\alpha}, x'}^2 \right)^2 \nonumber \\
      &=& \sum_{t \in \Rows} \left( \sum_{\alpha \in \F_q^*} 
        \innr{M_{t,\alpha}, x+x'} \innr{M_{t,\alpha}, x-x'} 
      \right)^2. \label{eqn:RIP:d}
    \end{eqnarray}
    By Cauchy-Schwarz, \eqref{eqn:RIP:d} can be bounded as

\begin{eqnarray}
  \| x - x' \|_X^2 &\leq&
  \sum_{t \in \Rows} \left( \sum_{\alpha \in \F_q^*} 
    \innr{M_{t,\alpha}, x+x'}^2 \right) 
  \left( \sum_{\alpha \in \F_q^*} 
    \innr{M_{t,\alpha}, x-x'}^2 \right) \label{eqn:RIP:q}  \\
  &\leq& 
  \sum_{t \in \Rows} \sum_{\alpha \in \F_q^*} 
  \innr{M_{t,\alpha}, x+x'}^2
  \max_{t \in \Rows}
  \left( \sum_{\alpha \in \F_q^*} 
    \innr{M_{t,\alpha}, x-x'}^2 \right). \label{eqn:RIP:t}
\end{eqnarray}
Here is where our analysis differs from \cite{ref:RV08}.  When $q=2$,
\eqref{eqn:RIP:t} is exactly how the Gaussian metric is bounded in
\cite{ref:RV08}.  We obtain our improvement by bounding the metric in
a different way. Specifically, let $\eta \in (0,1]$ be a positive real
parameter to be determined later and let $r := 1+\eta$ and $s :=
1+1/\eta$ such that $1/r+1/s=1$. We assume that $\eta$ is so that $s$
becomes an integer.  We use H\"older's inequality with parameters $r$
and $s$ along with \eqref{eqn:RIP:q} to bound the metric as follows:
\begin{eqnarray}
  \| x - x' \|_X &\leq&
  \left(\sum_{t \in \Rows} \Big( \sum_{\alpha \in \F_q^*} 
    \innr{M_{t,\alpha}, x+x'}^2 \Big)^{r} \right)^{1/2r}
  \left(\sum_{t \in \Rows} \Big( \sum_{\alpha \in \F_q^*} 
    \innr{M_{t,\alpha}, x-x'}^2 \Big)^{s} \right)^{1/2s}.
  \label{eqn:RIP:metricBound}
\end{eqnarray}
Since $\|x\|_2=1$, $x$ is $k$-sparse, and $|M_{t,\alpha}|=1$ for all
choices of $(t,\alpha)$, Cauchy-Schwarz implies that
$\innr{M_{t,\alpha}, x}^2 \leq k$ and thus, using the triangle
inequality, we know that
\[ \sum_{\alpha \in \F_q^*} \innr{M_{t,\alpha}, x+x'}^2 \leq 4qk. \]
Therefore, for every $t \in \Rows$, seeing that $r = 1+\eta$, we have
\[
\Big( \sum_{\alpha \in \F_q^*} \innr{M_{t,\alpha}, x+x'}^2 \Big)^{r}
\leq (4qk)^{\eta} \sum_{\alpha \in \F_q^*} \innr{M_{t,\alpha},
  x+x'}^2,
\]
which, applied to the bound \eqref{eqn:RIP:metricBound} on the metric,
yields
% \footnote{The term $(4qk)^\eta$ on the left hand side of
% \eqref{eqn:RIP:e} should more precisely be
% $(4qk)^{\eta/2r}$. However, since $r \geq 1$, replacing the exponent
% $\eta/2r$ by $\eta$ is valid since it can only make the expression
% larger. The reader should note that we have implicitly made other
% simplifications of this kind at various other points in the proof.}
\begin{eqnarray}
  \| x - x' \|_X &\leq&  (4qk)^{\eta/2r}
  {\underbrace{\left( \sum_{t \in \Rows} \sum_{\alpha \in \F_q^*} 
        \innr{M_{t,\alpha}, x+x'}^2 \right)}_{\cE_2}}^{1/2r}
  \left(\sum_{t \in \Rows} \Big( \sum_{\alpha \in \F_q^*} 
    \innr{M_{t,\alpha}, x-x'}^2 \Big)^{s} \right)^{1/2s}. \label{eqn:RIP:e}
\end{eqnarray}
Now,
\begin{eqnarray}
  \cE_2 &\leq& 2 \left( \sum_{t \in \Rows} \sum_{\alpha \in \F_q^*} 
    \innr{M_{t,\alpha}, x}^2 + \sum_{t \in \Rows} \sum_{\alpha \in \F_q^*} 
    \innr{M_{t,\alpha}, x'}^2 \right) \leq 4 \cE'_\Rows,
\end{eqnarray}
where we have defined
\begin{equation}
  \cE'_\Rows := \sup_{x \in \Ball} \sum_{t \in \Rows} \sum_{\alpha \in
    \F_q^*} \innr{M_{t,\alpha}, x}^2 \label{eqn:RIP:y}.
\end{equation}
Observe that, by the triangle inequality,
\begin{equation}
  \cE'_\Rows \leq \sup_{x \in \Ball} \left|\sum_{t \in \Rows} \sum_{\alpha \in
      \F_q^*} \innr{M_{t,\alpha}, x}^2 -|\Rows|(q-1) \right| + |\Rows|(q-1). \label{eqn:RIP:z}
\end{equation}
Plugging \eqref{eqn:RIP:y} back in \eqref{eqn:RIP:e}, we so far have
\begin{equation}
  \| x - x' \|_X \leq
  4 (4qk)^{\eta/2r} {\cE'_{\Rows}}^{1/2r}
  \left(\sum_{t \in \Rows} \Big( \sum_{\alpha \in \F_q^*} 
    \innr{M_{t,\alpha}, x-x'}^2 \Big)^{s} \right)^{1/2s}. 
  \label{eqn:RIP:h}
\end{equation}

For a real parameter $u > 0$, define $N_X(u)$ as the minimum number of
spheres of radius $u$ required to cover $\Ball$ with respect to the
metric $\| \cdot \|_X$.  We can now apply Dudley's theorem on supremum
of Gaussian processes (cf.~\cite[Theorem~11.17]{ref:banach}) and
deduce that
\begin{equation}
  \cE_\Rows \lsim \int_{u=0}^{\infty} \sqrt{\log N_X(u)} du. \label{eqn:RIP:i}
\end{equation}

In order to make the metric $\| \cdot \|_X$ easier to work with, we
define a related metric $\| \cdot \|_{X'}$ on $\Ball$, according to
the right hand side of \eqref{eqn:RIP:h}, as follows:

\begin{equation}
  \| x - x' \|_{X'}^{2s} 
  := \sum_{t \in \Rows} \Big( \sum_{\alpha \in \F_q^*} 
  \innr{M_{t,\alpha}, x-x'}^2 \Big)^{s}.
  \label{eqn:RIP:g}
\end{equation}
Let $K$ denote the diameter of $\Ball$ under the metric $\|\cdot
\|_{X'}$. Trivially, $K \leq 2 |\Rows|^{1/2s} \sqrt{qk}$.  By
\eqref{eqn:RIP:h}, we know that
\begin{equation}
  \| x - x' \|_X \leq
  4 (4qk)^{\eta/2r} {\cE'_{\Rows}}^{1/2r}
  \| x - x' \|_{X'}. \label{rqn:RIP:h}
\end{equation}
Define $N_{X'}(u)$ similar to $N_X(u)$, but with respect to the new
metric $X'$.  The preceding upper bound \eqref{rqn:RIP:h} thus implies
that
\begin{equation}
  N_X(u) \leq N_{X'}(u/(4 (4qk)^{\eta/2r} {\cE'_{\Rows}}^{1/2r})).
\end{equation}
Now, using this bound in \eqref{eqn:RIP:i} and after a change of
variables, we have
\begin{equation}
  \cE_\Rows \lsim (4qk)^{\eta/2r} {\cE'_{\Rows}}^{1/2r} \int_{u=0}^{\infty} \sqrt{\log N_{X'}(u)} du. \label{eqn:RIP:p}
\end{equation}
Now we take an expectation over $\Rows$. Note that \eqref{eqn:RIP:z}
combined with \eqref{eqn:RIP:w} implies
\begin{equation} %
  \E_\Rows \cE'_\Rows \leq \cE + |\Rows|(q-1). \label{eqn:RIP:r} 
\end{equation}
Using \eqref{eqn:RIP:i}, we get
\begin{eqnarray*}
  \cE^{2r} &\stackrel{\eqref{eqn:RIP:c}}{\lsim}& \cE_1^{2r} = (\E_\Rows \cE_\Rows)^{2r} \leq \E_\Rows \cE_\Rows^{2r} \\ &\lsim&
  (4qk)^{\eta} \E_\Rows \left( (\cE'_\Rows)^{1/2r} \int_{u=0}^{\infty} \sqrt{\log N_{X'}(u)} du \right)^{2r} \\
  &\leq& (4qk)^{\eta} (\E_\Rows \cE'_\Rows) \max_{\Rows} \left( \int_{u=0}^{\infty} \sqrt{\log N_{X'}(u)} du \right)^{2r} \\
  &\stackrel{\eqref{eqn:RIP:r}}{\leq}& (4qk)^{\eta} (\cE+|\Rows|(q-1)) \max_{\Rows} \left( \int_{u=0}^{\infty} \sqrt{\log N_{X'}(u)} du \right)^{2r}.
\end{eqnarray*}
Define \newcommand{\bE}{\bar{\cE}}
\begin{equation} \label{eqn:RIP:x} \bE := \cE \cdot \left(
    \frac{\cE}{\cE+|\Rows|(q-1)} \right)^{1/(1+2\eta)}.
\end{equation}
Therefore, recalling that $r = 1+\eta$, the above inequality
simplifies to
\begin{equation}
  \bE \lsim (4qk)^{\eta} \max_{\Rows} \left( \int_{u=0}^{K} \sqrt{\log N_{X'}(u)} du 
  \right)^{1+1/(1+2\eta)},
  \label{eqn:RIP:j}
\end{equation}
where we have replaced the upper limit of integration by the diameter
of $\Ball$ under the metric $\|\cdot\|_{X'}$ (obviously, $N_{X'}(u) =
1$ for all $u \geq K$).

Now we estimate $N_{X'}(u)$ in two ways.  The first estimate is the
simple volumetric estimate (cf.~\cite{ref:RV08}) that gives
\begin{equation} \label{eqn:RIP:k} \log N_{X'}(u) \lsim k \log(N/k) +
  k \log(1+2K/u).
\end{equation}
This estimate is useful when $u$ is small.  For larger values of $u$,
we use a different estimate as follows.

\begin{claim} \label{clm:empirical} $\log N_{X'}(u) \lsim
  |\Rows|^{1/s} (\log N) q k s /u^2.$
\end{claim}

\begin{proof}
  We use the method used in \cite{ref:RV08} (originally attributed to
  B.~Maurey, cf.~\cite[\S~1]{ref:Carl85}) and empirically estimate any
  fixed real vector $x = (x_1, \ldots, x_N) \in \Ball$ by an
  $m$-sparse random vector $Z$, for sufficiently large $m$. The vector
  $Z$ is an average
  \begin{equation}
    \label{eq:def-of-Z}
    Z := \frac{\sqrt{k}}{m} \sum_{i = 1}^m Z_i,
  \end{equation}
  where each $Z_i$ is a $1$-sparse vector in $\C^N$ and $\E[Z_i] =
  x/\sqrt{k}$. The $Z_i$ are independent and identically distributed.

  The way each $Z_i$ is sampled is as follows. Let $x' := x/\sqrt{k}$
  so that $\|x'\|_1 = \frac{\|x\|_1}{\sqrt{k}} \leq 1$.  With
  probability $1-\|x'\|$, we set $Z_i := 0$. With the remaining
  probability, $Z_i$ is sampled by picking a random $j \in \supp(x)$
  according to the probabilities defined by absolute values of the
  entries of $x'$, and setting $Z_i = \sgn(x'_j) e_j$, where $e_j$ is
  the $j$th standard basis vector\footnote{Note that, since we have
    assumed $x$ is a real vector, $\sgn(\cdot)$ is always
    well-defined.}. This ensures that $\E[Z_i] = x'$.  Thus, by
  linearity of expectation, it is clear that $\E[Z] = x$.  Now,
  consider \[ \cE_3 := \E \| Z - x \|_{X'}. \] \Vnote{small change --
    changed $m$-sparse $Z$ to one of form \eqref{eq:def-of-Z}, and
    gave explicit bound on number of possible $Z$'s} If we pick $m$
  large enough to ensure that $\cE_3 \leq u$, regardless of the
  initial choice of $x$, then we can conclude that for every $x$,
  there exists a $Z$ of the form \eqref{eq:def-of-Z} that is at
  distance at most $u$ from $x$ (since there is always some fixing of
  the randomness that attains the expectation). In particular, the set
  of balls centered at all possible realizations of $Z$ would cover
  $\Ball$. Since the number of possible choices of $Z$ of the form
  \eqref{eq:def-of-Z} is at most $(2N+1)^m$, we have
  \begin{equation}
    \log N_{X'}(u) \lsim m \log N. \label{eqn:RIP:l}
  \end{equation}

  In order to estimate the number of independent samples $m$, we use
  symmetrization again to estimate the deviation of $Z$ from its
  expectation $x$. Namely, since the $Z_i$ are independent, by the
  symmetrization technique stated in Proposition~\ref{prop:sym} we
  have \Vnote{I think the $\|x\|_1$ here should be $\sqrt{k}$; made
    this change and propagated it.}
  \begin{equation} \label{eqn:RIP:v} %\E \| Z-x \|_{X'} \leq
    \cE_3 \lsim \frac{\sqrt{k}}{m} \cdot \E \left\| \sum_{i=1}^m
      \eps_i Z_i \right\|_{X'},
  \end{equation}
  where $(\eps_i)_{i \in [m]}$ is a sequence of independent Rademacher
  random variables in $\{-1,+1\}$.  Now, consider
  \begin{eqnarray}
    \cE_4 &:=& \E \left\| \sum_{i=1}^m \eps_i Z_i \right\|_{X'}^{2s} \nonumber \\ &=&
    \E \sum_{t \in \Rows} \Big( \sum_{\alpha \in \F_q^*} \innr{M_{t,\alpha}, \sum_{i=1}^m \eps_i Z_i }^2
    \Big)^s \nonumber  \\
    &=&
    \sum_{t \in \Rows} \E \left( \sum_{\alpha \in \F_q^*} \Big( \sum_{i=1}^m \eps_i \innr{M_{t,\alpha}, Z_i }
      \Big)^2 \right)^s \nonumber \\
    &=&
    \sum_{t \in \Rows} \E \left( \sum_{i,j=1}^m \eps_i \eps_j \sum_{\alpha \in \F_q^*} \innr{M_{t,\alpha}, Z_i } \innr{M_{t,\alpha}, Z_j }^\ast
    \right)^s. \label{eqn:RIP:s} 
  \end{eqnarray}
  Since the entries of the matrix $M$ are bounded in magnitude by $1$,
  we have
  \[
  \Big| \sum_{\alpha \in \F_q^*} \innr{M_{t,\alpha}, Z_i }
  \innr{M_{t,\alpha}, Z_j }^\ast \Big| \leq q.
  \]
  Using this bound and Proposition~\ref{prop:moment},
  \eqref{eqn:RIP:s} can be simplified as
  \begin{equation*}
    \cE_4 = \E \left\| \sum_{i=1}^m \eps_i Z_i \right\|_{X'}^{2s} \leq |\Rows| (4 q m s)^{s},
  \end{equation*}
  \Vnote{Changed two occurrences of $\|x\|_1$ to $\sqrt{k}$} and
  combined with \eqref{eqn:RIP:v}, and using Jensen's inequality,
  \[
  \cE_3 \lsim |\Rows|^{1/2s} \sqrt{ 4 q k s /m }.
  \]
  Therefore, we can ensure that $\cE_3 \leq u$, as desired, for some
  large enough choice of $m$; specifically, for some $m \lsim
  |\Rows|^{1/s} q k s /u^2$.  Now from \eqref{eqn:RIP:l}, we get
  \begin{equation} \label{eqn:RIP:m} \log N_{X'}(u) \lsim
    |\Rows|^{1/s} (\log N) q k s /u^2.
  \end{equation}
  Claim~\ref{clm:empirical} is now proved.
\end{proof}
Now we continue the proof of Lemma~\ref{lem:RIP:expectation}.  Break
the integration in \eqref{eqn:RIP:j} into two intervals. Consider
\[
\cE_5 := \underbrace{\int_{u=0}^{A} \sqrt{\log N_{X'}(u)} du}_{\cE_6}
+ \underbrace{\int_{u=A}^{K} \sqrt{\log N_{X'}(u)} du}_{\cE_7},
\]
where $A := K/\sqrt{qk}$.  We claim the following bound on $\cE_5$.

\begin{claim} \label{claim:dudley} $ \cE_5 \lsim |\Rows|^{1/2s} \sqrt{
    (\log N) q k s} \log(qk).$
\end{claim}

\begin{proof}
  First, we use \eqref{eqn:RIP:k} to bound $\cE_6$ as follows.
  \begin{equation} \label{eq:dud} \cE_6 \lsim A \sqrt{k \log(N/k)} +
    \sqrt{k} \int_{u=0}^{A} \sqrt{\ln(1+2K/u)} du.
  \end{equation}
  Observe that $2 K/u \geq 1$, so $1 + 2K/u \leq 4K/u$. Thus,
  \begin{eqnarray}
    \int_0^A \sqrt{\ln(1 + 2K/u)}\,du &\leq& \int_0^A \sqrt{\ln(4K/u)}\,du \nonumber \\
    &=& 2K \int_0^{A/2K} \sqrt{\ln(2/u)}\,du \nonumber \\
    &=& 2K \left(\frac{A}{2K}\sqrt{\ln (4K/A)} + \sqrt{\pi}\left(1 -
        \erf\left(\sqrt{\ln(4K/A)}\right)\right)\right) \nonumber \\
    &=& A\sqrt{\ln(4K/A)} + 2\sqrt{\pi} K\, \erfc\left(\sqrt{\ln(4K/A)}\right), \label{eqn:RIP:erf}
  \end{eqnarray}
  where $\erf(\cdot)$ is the Gaussian error function $\erf(x) :=
  \frac{2}{\sqrt{\pi}} \int_{t=0}^x e^{-t^2}dt$, and $\erfc(x) :=
  1-\erf(x)$, and we have used the integral identity
  \[
  \int \sqrt{\ln(1/x)} dx = -\frac{\sqrt{\pi}}{2}
  \erf\Big(\sqrt{\ln(1/x)}\Big) + x \sqrt{\ln(1/x)} + C
  \]
  that can be verified by taking derivatives of both sides.  Let us
  use the following upper bound
  \[
  (\forall x > 0) \quad \text{erfc}(x) =
  \frac{2}{\sqrt{\pi}}\int_{t=x}^\infty e^{-t^2}dt \leq
  \frac{2}{\sqrt{\pi}}\int_{t=x}^\infty \frac{t}{x} e^{-t^2}dt =
  \frac{1}{\sqrt{\pi}}\cdot \frac{e^{-x^2}}{x},
  \]
  and plug it into \eqref{eqn:RIP:erf} to obtain
  \begin{eqnarray*}
    \int_0^A \sqrt{\ln(1 + 2K/u)}\,du &\leq& A\sqrt{\ln(4K/A)} + 2\sqrt{\pi}
    K\left(\frac{1}{\sqrt{\pi}}\cdot\frac{A}{4K}\cdot \frac{1}{\sqrt{\ln(4K/A)}}\right)\\
    &=& A\sqrt{\ln(4K/A)} + \frac{A}{2\sqrt{\ln(4K/A)}} \\
    &\lsim& A\sqrt{\log(qk)} \lsim |\Rows|^{1/2s}\sqrt{\log(qk)}),
  \end{eqnarray*}
  where the last inequality holds since $A = K/\sqrt{qk} \lsim
  |\Rows|^{1/2s}$.  Therefore, by \eqref{eq:dud} we get
  \begin{equation} \label{eqn:RIP:n} \cE_6 \lsim
    |\Rows|^{1/2s}\sqrt{k}(\sqrt{\log N} + \sqrt{\log(qk)}).
  \end{equation}
  On the other hand, we use Claim~\ref{clm:empirical} to bound
  $\cE_7$.
  \begin{eqnarray}
    \cE_7 &\lsim& \sqrt{|\Rows|^{1/s} (\log N) q k s} \int_{u=A}^{K} du/u \nonumber \\
    &\lsim& |\Rows|^{1/2s} \sqrt{ (\log N) q k s} \log(qk). \label{eqn:RIP:o}
  \end{eqnarray}
  Combining \eqref{eqn:RIP:n} and \eqref{eqn:RIP:o}, we conclude that
  for every fixed $T$,
  \begin{equation*}
    \cE_5 = \cE_6 + \cE_7 \lsim |\Rows|^{1/2s} \sqrt{ (\log N) q k s} \log(qk).
  \end{equation*}
  Claim~\ref{claim:dudley} is now proved.
\end{proof}
By combining Claim~\ref{claim:dudley} and \eqref{eqn:RIP:j}, we have
\begin{eqnarray}
  \bE &\lsim&  
  (4qk)^{\eta} \max_{\Rows} \cE_5^{1+1/(1+2\eta)} \nonumber \\ &\lsim&
  (4qk)^{\eta} \Big(|\Rows|^{1/2s} \sqrt{ (\log N) q k s} \log(qk)\Big)^{1+1/(1+2\eta)} \nonumber \\
  &=&
  (4qk)^{\eta} |\Rows|^{\eta/(1+2\eta)} \Big(\sqrt{ (\log N) q k s} \log(qk)\Big)^{1+1/(1+2\eta)}.
  \label{eqn:RIP:u}
\end{eqnarray}
By Proposition~\ref{prop:deltaSqr} (setting $a := \cE/(|\Rows|(q-1))$
and $\mu := 2\eta$), and recalling the definition \eqref{eqn:RIP:x} of
$\bE$, in order to ensure that $\cE \leq \delta' (q-1) |\Rows|$, it
suffices to have
\begin{equation} \label{eqn:RIP:aa} \bE \leq
  {\delta'}^{\frac{2(1+\eta)}{1+2\eta}} |\Rows|(q-1)/4.
\end{equation}
Using \eqref{eqn:RIP:u}, and after simple manipulations,
\eqref{eqn:RIP:aa} can be ensured for some
\[
|\Rows| \lsim \frac{(4qk)^{2\eta}}{\eta} k (\log N) \log^2(qk) /
{\delta'}^2.
\]
This expression is minimized for some $\eta = 1/\Theta(\log(qk))$,
which gives
\[
|\Rows| \lsim k (\log N) \log^3(qk) / {\delta'}^2.
\]
This concludes the proof of Lemma~\ref{lem:RIP:expectation}.
\end{proof}

Now we turn to the tail bound on the random variable $\Delta$ and
estimate the appropriate size of $\Rows$ required to ensure that
$\Pr[\Delta > \delta |\Rows| (q-1)] \leq \gamma$.  We observe that the
tail bound proved in $\cite{ref:RV08}$ uses the bound on $\E[\Delta]$
as a black box. In particular, the following lemma, for $q=2$, is
implicit in the proof of Theorem~3.9 in $\cite{ref:RV08}$ (the
extension to arbitrary alphabet size $q$ requires only syntactical
modifications to the exact argument in $\cite{ref:RV08}$).

\begin{lem} \cite[implicit]{ref:RV08} Suppose that, for some $\delta'
  > 0$, $\E[\Delta] \leq \delta' |\Rows| (q-1)$. Then, there are
  absolute constants $c_1, c_2, c_3$ such that for every $\lambda >
  0$,
  \[
  \Pr[ \Delta > (c_1 + c_2 \lambda) \delta' |\Rows|(q-1)] \leq 3
  \exp(-\lambda^2),
  \]
  provided that \begin{gather} |\Rows|/k \geq c_3
    \sqrt{\lambda}/{\delta'}. \label{eqn:tailN} \end{gather} \qed
\end{lem}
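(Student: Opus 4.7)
The plan is to derive this tail bound by adapting the proof of Theorem~3.9 of \cite{ref:RV08} to arbitrary alphabet size $q$; the changes will be essentially syntactic. The starting observation is that $\Delta_x$ decomposes as a sum of independent centered random variables indexed by rows of $\Rows$:
\[
\Delta_x \;=\; \sum_{t \in \Rows} f_{t,x}, \qquad f_{t,x} \;:=\; \sum_{\alpha \in \F_q^*} \innr{M_{t,\alpha}, x}^2 \;-\; (q-1),
\]
and each $f_{t,x}$ is bounded in magnitude by $O(qk)$, since $\|x\|_2 = 1$, $x$ is $k$-sparse, and $|M_{t,\alpha}| = 1$ for every $(t,\alpha)$, so $|\innr{M_{t,\alpha}, x}|^2 \le k$ by Cauchy--Schwarz. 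Thus $\Delta = \sup_{x\in\Ball} |\Delta_x|$ is the supremum of a centered empirical process with uniformly bounded summands, indexed by the compact set $\Ball$.

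Next, I would invoke a Talagrand-type deviation inequality for the supremum of such a process. The inequality yields, for every $\lambda > 0$, a bound of the shape
\[
\Pr\!\Bigl[\,\Delta \;\ge\; \E[\Delta] \;+\; C\bigl(\sqrt{\lambda\, \sigma^2} + \lambda R\bigr)\,\Bigr] \;\le\; 3\exp(-\lambda^2),
\]
where $R \lesssim qk$ is the uniform bound on the summands and $\sigma^2 := \sup_{x\in\Ball} \sum_{t\in\Rows} \E[f_{t,x}^2]$ is the worst-case variance. A standard symmetrization--contraction argument, performed exactly as in \cite{ref:RV08}, controls $\sigma^2$ self-referentially in terms of $\E[\Delta]$ and the trivial bound $|\Rows|(q-1)$, yielding $\sigma^2 \lesssim \E[\Delta] \cdot |\Rows|(q-1)$. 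Substituting the hypothesis $\E[\Delta] \le \delta'|\Rows|(q-1)$ gives $\sqrt{\lambda\,\sigma^2} \lesssim \sqrt{\lambda}\,\delta'^{1/2}|\Rows|(q-1)$, and combining with the mean bound produces a deviation of the form $(c_1 + c_2\lambda)\,\delta'|\Rows|(q-1)$, provided the sub-exponential contribution $\lambda R = O(\lambda q k)$ can be absorbed into the linear-in-$\lambda$ piece. That absorption condition reduces to $|\Rows|/k \gtrsim \sqrt{\lambda}/\delta'$, accounting for the $\sqrt{\lambda}$ (rather than $\lambda$) factor in the stated hypothesis \eqref{eqn:tailN}, where the improvement from $\lambda$ to $\sqrt{\lambda}$ comes from optimally rebalancing the sub-Gaussian and sub-exponential regimes of the Bernstein-type bound.

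\textbf{Main obstacle.} The most delicate part is tracking constants so that the condition on $|\Rows|/k$ ends up with $\sqrt{\lambda}$ rather than $\lambda$. This is precisely the subtle step in \cite{ref:RV08}, and it requires either using a version of Talagrand's inequality in which the variance is itself controlled by the random variable $\Delta$ (rather than by its worst-case value), or a one-step bootstrap of the tail estimate. The only genuinely $q$-dependent change compared to the binary case of \cite{ref:RV08} is that each appearance of $\innr{M_{t,\alpha},x}^2$ must be summed over $\alpha \in \F_q^*$; this simply reinstates the $(q-1)$ factors visible in the statement but leaves the structural form of both the empirical-process argument and the ensuing Bernstein-type tail bound unchanged, so the proof goes through verbatim up to these adjustments.
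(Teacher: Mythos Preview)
The paper does not actually give its own proof of this lemma: it states the result with a \qed, attributes it to \cite{ref:RV08} (implicitly, for $q=2$), and remarks that the extension to general $q$ ``requires only syntactical modifications to the exact argument in \cite{ref:RV08}.'' Your plan---decompose $\Delta_x$ as a sum over $t\in\Rows$ of bounded, centered, independent terms, apply a Talagrand-type concentration inequality for suprema of empirical processes, and bound the variance self-referentially via $\E[\Delta]$---is precisely the structure of the Theorem~3.9 argument in \cite{ref:RV08}, so your proposal coincides with what the paper intends.
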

Now it suffices to instantiate the above lemma with $\lambda :=
\sqrt{\ln(3/\gamma)}$ and $\delta' := \delta/(c_1 + c_2 \lambda) =
\delta/\Theta(\sqrt{\ln(3/\gamma)})$, and use the resulting value of
$\delta'$ in Lemma~\ref{lem:RIP:expectation}. Since
Lemma~\ref{lem:RIP:expectation} ensures that $|\Rows|/k = \Omega(\log
N)$, we can take $N$ large enough (depending on $\delta, \gamma$) so
that \eqref{eqn:tailN} is satisfied.  This completes the proof of
Theorem~\ref{thm:RIP}.
\end{proof}

% \begin{remark}
The proof of Theorem~\ref{thm:RIP} does not use any property of the
DFT-based matrix other than orthogonality and boundedness of the
entries. However, for syntactical reasons, that is, the way the matrix
is defined for $q>2$, we have presented the theorem and its proof for
the special case of the DFT-based matrices. The proof goes through
with no technical changes for any orthogonal matrix with bounded
entries (as is the case for the original proof of \cite{ref:RV08}).
In particular, we remark that the following variation of
Theorem~\ref{thm:RIP} also holds:

\begin{thm}
  Let $A \in \C^{N \times N}$ be any orthonormal matrix with entries
  bounded by $O(1/\sqrt{N})$.  Let $\Rows$ be a random multiset of
  rows of $A$, where $|\Rows|$ is fixed and each element of $\Rows$ is
  chosen uniformly at random, and independently with replacement.
  Then, for every $\delta, \gamma > 0$, and assuming $N \geq
  N_0(\delta, \gamma)$, with probability at least $1-\gamma$ the
  matrix $(\sqrt{N/|\Rows|}) A_\Rows$ satisfies RIP-2 of order $k$
  with constant $\delta$ for a choice of $|\Rows|$ satisfying
  \Vnote{small typo fix: $\log n \to \log N$}
  \begin{align*}
    |\Rows| \lsim \frac{\log(1/\gamma)}{\delta^2} k (\log N) \log^3 k.
    \tag*{\qedhere \qed}
  \end{align*}
\end{thm}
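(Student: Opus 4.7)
The plan is to follow the proof of Theorem~\ref{thm:RIP} essentially verbatim, making only minor notational changes to account for the absence of the finite-field sum over $\alpha \in \F_q^\ast$. First, I would rescale by setting $\tilde A := \sqrt{N}\, A$, so that the entries of $\tilde A$ are bounded in magnitude by an absolute constant. The target becomes showing that $\tilde A_\Rows / \sqrt{|\Rows|}$ satisfies RIP-2 of order $k$ with constant $\delta$. The rows of $\tilde A_\Rows$ are now indexed by a single parameter $t \in \Rows$ (no $\alpha$ coordinate), and the orthonormality of $A$ gives $\E_t\, \tilde A(t,i)\, \tilde A(t,j)^\ast = \mathds{1}[i=j]$, which is the centering property needed to define $\Delta_x := \|\tilde A_\Rows x\|_2^2 - |\Rows|$ with $\E[\Delta_x]=0$, exactly parallel to the beginning of the original proof.

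Next, I would reproduce the Gaussian-process argument: apply the symmetrization inequality (Proposition~\ref{prop:sym}) to reduce the task to bounding the supremum, over $x \in \Ball$, of the Gaussian process $G_x := \sum_{t \in \Rows} g_t\, \innr{\tilde A_t, x}^2$ with induced pseudo-metric
\[
\|x-x'\|_X^2 \;=\; \sum_{t \in \Rows} \innr{\tilde A_t, x+x'}^2\,\innr{\tilde A_t, x-x'}^2.
\]
Cauchy--Schwarz followed by H\"older's inequality with parameters $r = 1+\eta$ and $s = 1 + 1/\eta$ carries through word-for-word; the bound $\innr{\tilde A_t, x}^2 \lsim k$ follows from the bounded-entry assumption on $\tilde A$ together with the $k$-sparsity of $x$, and it plays exactly the role that $\sum_{\alpha \in \F_q^\ast} \innr{M_{t,\alpha}, x+x'}^2 \leq 4qk$ played in the original proof (with $q$ now replaced by an absolute constant).

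The empirical covering number estimate via Maurey's method (Claim~\ref{clm:empirical}) likewise transfers with only one substantive change: the bound $|\sum_{\alpha \in \F_q^\ast} \innr{M_{t,\alpha}, Z_i}\innr{M_{t,\alpha}, Z_j}^\ast| \leq q$ used in the DFT case is replaced by $|\innr{\tilde A_t, Z_i}\innr{\tilde A_t, Z_j}^\ast| = O(1)$, so that Proposition~\ref{prop:moment} yields $N_{X'}(u) \lsim |\Rows|^{1/s} (\log N)\, k\, s / u^2$ without any factor of $q$. Dudley's integral, the splitting of the integration at $A = K/\sqrt{k}$, the optimization $\eta = 1/\Theta(\log k)$, the passage from $\bar{\cE}$ to $\cE$ via Proposition~\ref{prop:deltaSqr}, and the final tail bound derived from the lemma implicit in \cite{ref:RV08} all carry over unchanged; the only effect of the absence of the $\alpha$-sum is that the factor $\log^3(qk)$ in the row count collapses to $\log^3 k$.

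There is no genuinely new technical obstacle; the work is purely bookkeeping. The main thing to watch is that every place in the original argument where the factor $q$ arises can be traced back either to the size of the inner summation set $\F_q^\ast$ or to a bound on simplex-encoded inner products, and in each case it is replaced by an absolute constant in the general setting. Consequently the final bound on $|\Rows|$ is the same as in Theorem~\ref{thm:RIP} with the term $\log^3(qk)$ replaced by $\log^3 k$, which is precisely the statement claimed.
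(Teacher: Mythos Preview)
Your proposal is correct and matches the paper's own treatment: the paper does not give a separate proof of this theorem but simply remarks that the proof of Theorem~\ref{thm:RIP} uses nothing about the DFT-based matrix beyond orthogonality and boundedness of the entries, so it carries over verbatim to any orthonormal matrix with $O(1/\sqrt{N})$-bounded entries. Your walkthrough of where the $\alpha$-sum and the factors of $q$ collapse to absolute constants is exactly the bookkeeping the paper alludes to, and your conclusion that $\log^3(qk)$ becomes $\log^3 k$ is precisely the claimed bound.
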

% \end{remark}

% ==============================================================================
% ==============================================================================

% \section{Concluding remarks}

% ==============================================================================
% ==============================================================================

\bibliographystyle{alpha} \bibliography{rip}
\appendix

\section{Useful tools}

The original definition of RIP-2 given in Definition~\ref{def:RIP}
considers all complex vectors $x \in \C^n$. Below we show that it
suffices to satisfy the property only for real-valued vectors $x$.

\begin{prop} \label{prop:RIPreal} Let $M \in \C^{m \times N}$ be a
  complex matrix such that $M^\dagger M \in \R^{N\times N}$ and for
  any $k$-sparse vector $x \in \R^N$, we have
  \[
  (1-\delta) \| x \|_2^2 \leq \| Mx \|_2^2 \leq (1+\delta) \| x
  \|_2^2.
  \]
  Then, $M$ satisfies RIP-2 of order $k$ with constant $\delta$.
\end{prop}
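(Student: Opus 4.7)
The plan is to decompose a complex $k$-sparse vector into its real and imaginary parts and show that the RIP inequality for complex vectors reduces to the real case, using crucially the hypothesis that $M^\dagger M$ is real.

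Concretely, write $x = a + \I b$ with $a, b \in \R^N$. Since $x$ is $k$-sparse, both $a$ and $b$ are supported on $\supp(x)$, hence both are $k$-sparse real vectors. By orthogonality of the real and imaginary parts of $x$ under the standard Hermitian inner product, $\|x\|_2^2 = \|a\|_2^2 + \|b\|_2^2$. The main computation is to show the same splitting for $\|Mx\|_2^2$:
\begin{equation*}
\|Mx\|_2^2 = x^\dagger M^\dagger M x = (a^T - \I b^T)(M^\dagger M)(a + \I b),
\end{equation*}
which expands to $a^T (M^\dagger M) a + b^T (M^\dagger M) b + \I\bigl(a^T (M^\dagger M) b - b^T (M^\dagger M) a\bigr)$.

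Since $M^\dagger M$ is Hermitian and, by hypothesis, real, it is a real symmetric matrix. Therefore $a^T (M^\dagger M) b = b^T (M^\dagger M) a$, so the imaginary part vanishes, leaving $\|Mx\|_2^2 = \|Ma\|_2^2 + \|Mb\|_2^2$. Applying the assumed real-vector RIP bounds to $a$ and to $b$ separately and adding gives
\begin{equation*}
(1-\delta)\|x\|_2^2 = (1-\delta)(\|a\|_2^2 + \|b\|_2^2) \leq \|Ma\|_2^2 + \|Mb\|_2^2 \leq (1+\delta)\|x\|_2^2,
\end{equation*}
which is exactly the RIP-2 condition for the complex vector $x$. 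There is no real obstacle here; the only thing to be careful about is verifying that $M^\dagger M$ is symmetric (not just real), which follows from it being simultaneously Hermitian and real, and making sure that the sparsity patterns of $a$ and $b$ both lie inside $\supp(x)$ so that the real-case hypothesis applies.
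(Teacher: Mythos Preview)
Your proof is correct and follows essentially the same approach as the paper: decompose $x=a+\I b$, use that $M^\dagger M$ is real (hence symmetric, being Hermitian) to eliminate the cross terms so that $\|Mx\|_2^2=\|Ma\|_2^2+\|Mb\|_2^2$, and then apply the real-vector hypothesis to $a$ and $b$ separately. Your version is in fact slightly more careful in explicitly noting that $a$ and $b$ are $k$-sparse because $\supp(a),\supp(b)\subseteq\supp(x)$.
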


\begin{proof}
  Let $x = a + \I b$, for some $a, b \in \R^N$, be any complex vector.
  We have $\|x\|_2^2 = \|a\|_2^2+\|b\|_2^2$, and
  \begin{eqnarray*}
    \Big| \| Mx \|_2^2 - \|x\|_2^2 \Big| &=& \Big| x^\dagger M^\dagger M x - \|x\|_2^2 \Big| \\
    &=& \Big| (a^\dagger - \I b^\dagger) M^\dagger M (a + \I b) - \|x\|_2^2 \Big| \\
    &=& \Big| a^\dagger M^\dagger M a + b^\dagger M^\dagger M b + 
    \I (a^\dagger M^\dagger M b - b^\dagger M^\dagger M a)
    - \|x\|_2^2 \Big| \\
    &\stackrel{(\star)}{=}& \Big| a^\dagger M^\dagger M a + b^\dagger M^\dagger M b 
    - \|x\|_2^2 \Big| \\
    &=& \Big| a^\dagger M^\dagger M a - \|a\|_2^2+ b^\dagger M^\dagger M b - \|b\|_2^2\Big| \\
    &\stackrel{(\star\star)}{\leq}& \delta \|a\|_2^2 + \delta \|b\|_2^2 \\
    &=& \delta \|x\|_2^2,
  \end{eqnarray*}
  where $(\star)$ is due to the assumption that $M^\dagger M$ is real,
  which implies that $a^\dagger M^\dagger M b$ and $b^\dagger
  M^\dagger M a$ are conjugate real numbers (and thus, equal), and
  $(\star \star)$ is from the assumption that the RIP-2 condition is
  satisfied by $M$ for real-valued vectors and the triangle
  inequality.
\end{proof}

As a technical tool, we use the standard symmetrization technique
summarized in the following proposition for bounding deviation of
summation of independent random variables from the expectation. The
proof is a simple convexity argument (see, e.g.,
\cite[Lemma~6.3]{ref:banach} and \cite[Lemma~5.70]{ref:Ver12}).

\begin{prop} \label{prop:sym} Let $(X_i)_{i \in [m]}$ be a finite
  sequence of independent random variables in a Banach space, and
  $(\eps_i)_{i \in [m]}$ and $(g_i)_{i \in [m]}$ be sequences of
  independent Rademacher (i.e., each uniformly random in $\{-1,+1\}$)
  and standard Gaussian random variables, respectively. Then,
  \begin{align*}
    \E \Big\| \sum_{i \in [m]} (X_i - \E[X_i]) \Big\| \lsim \E \Big\|
    \sum_{i \in [m]} \eps_i X_i \Big\| \lsim \E \Big\| \sum_{i \in
      [m]} g_i X_i \Big\|.
  \end{align*}
  More generally, for a stochastic process $(X_i^{(\tau)})_{i \in [m],
    \tau \in \mathcal{T}}$ where $\mathcal{T}$ is an index set,
  \begin{align*}
    \E \sup_{\tau \in \mathcal{T}} \Big\| \sum_{i \in [m]}
    \big(X_i^{(\tau)} - \E[X_i^{(\tau)}]\big) \Big\| \lsim \E
    \sup_{\tau \in \mathcal{T}} \Big\| \sum_{i \in [m]} \eps_i
    X_i^{(\tau)} \Big\| \lsim \E \sup_{\tau \in \mathcal{T}} \Big\|
    \sum_{i \in [m]} g_i X_i^{(\tau)} \Big\|.
  \end{align*} \qed
\end{prop}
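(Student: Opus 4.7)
The plan is to prove both inequalities by independent-copy symmetrization and conditioning arguments, and then observe that the stochastic-process version follows because $\sup_{\tau}\|\cdot\|$ inherits the relevant convexity properties (subadditivity, positive homogeneity, and the inequality $\|\E Y\| \le \E\|Y\|$) from $\|\cdot\|$.

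First I would prove the left inequality $\E\|\sum_i(X_i - \E X_i)\| \lsim \E\|\sum_i \eps_i X_i\|$ by the standard ghost-sample trick. Let $(X_i')_{i\in[m]}$ be an independent copy of $(X_i)_{i\in[m]}$. Since $\E X_i = \E_{X'}X_i'$, I can write $\sum_i (X_i - \E X_i) = \E_{X'}\bigl[\sum_i (X_i - X_i')\bigr]$ and then pull the inner expectation outside the norm by Jensen's inequality, obtaining
\[
\E\Bigl\|\sum_i (X_i - \E X_i)\Bigr\| \le \E_{X,X'}\Bigl\|\sum_i (X_i - X_i')\Bigr\|.
\]
Because each $X_i - X_i'$ is symmetric and the $(X_i - X_i')$ are jointly independent, their joint distribution is invariant under independent random sign flips: the process $(X_i - X_i')_i$ has the same distribution as $(\eps_i(X_i - X_i'))_i$ for independent Rademachers $\eps_i$. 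Substituting and using the triangle inequality yields $\E\|\sum_i \eps_i(X_i - X_i')\| \le 2\,\E\|\sum_i \eps_i X_i\|$, which gives the first bound with constant $2$.

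Next I would prove the right inequality $\E\|\sum_i \eps_i X_i\| \lsim \E\|\sum_i g_i X_i\|$ by conditioning on the signs of the Gaussians. Write $g_i = \eps_i |g_i|$ where $\eps_i := \sgn(g_i)$ are independent Rademachers, independent of $(|g_i|)_i$. Since $\E|g_i| = \sqrt{2/\pi}$ is a fixed nonzero constant, I can insert it as follows: using Jensen's inequality to move $\E_{|g|}$ inside the norm,
\[
\E\Bigl\|\sum_i \eps_i X_i\Bigr\| = \sqrt{\pi/2}\,\E_{\eps}\Bigl\|\sum_i \eps_i \E_{|g|}[|g_i|]\,X_i\Bigr\| \le \sqrt{\pi/2}\,\E_{\eps,|g|}\Bigl\|\sum_i \eps_i |g_i|\,X_i\Bigr\| = \sqrt{\pi/2}\,\E\Bigl\|\sum_i g_i X_i\Bigr\|,
\]
which is the desired bound with constant $\sqrt{\pi/2}$.

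Finally, for the stochastic-process version, I would observe that the functional $F(Y) := \sup_{\tau\in\mathcal{T}}\|Y^{(\tau)}\|$, applied to a $\mathcal{T}$-indexed family $Y = (Y^{(\tau)})_\tau$, satisfies the triangle inequality, positive homogeneity, and the Jensen-type inequality $F(\E Y) \le \E F(Y)$ (the latter because $\sup_\tau \|\E Y^{(\tau)}\| \le \sup_\tau \E\|Y^{(\tau)}\| \le \E \sup_\tau \|Y^{(\tau)}\|$). These are the only properties of $\|\cdot\|$ used in the two arguments above, so replacing $\|\cdot\|$ with $F$ throughout reproduces the process version with identical constants. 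I expect no serious obstacle: the only care required is to set up the independent ghost copy $(X_i'^{(\tau)})_{i,\tau}$ of the entire process so that the symmetrization step can be carried out coordinatewise in $i$ while leaving the supremum over $\tau$ intact, and to justify moving expectations past the sup via the chain $\sup \E \le \E \sup$ at the appropriate moments.
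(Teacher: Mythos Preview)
Your proposal is correct and is precisely the standard convexity/symmetrization argument the paper has in mind: the paper does not actually prove this proposition but only cites it as a ``simple convexity argument'' from \cite[Lemma~6.3]{ref:banach} and \cite[Lemma~5.70]{ref:Ver12}, and your ghost-sample step for the first inequality together with the $g_i = \eps_i|g_i|$ / Jensen step for the second are exactly the textbook arguments those references contain. Your treatment of the process version via the seminorm $F(Y)=\sup_\tau\|Y^{(\tau)}\|$ is also the right way to lift the single-$\tau$ proof, with the implicit understanding that independence in $i$ means independence of the $\mathcal{T}$-indexed vectors $(X_i^{(\tau)})_{\tau}$ across $i$.
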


The following bound is used in the proof of Claim~\ref{clm:empirical},
a part of the proof of Lemma~\ref{lem:RIP:expectation}.

\begin{prop} \label{prop:moment} Let $(\eps_i)_{i \in [m]}$ be a
  sequence of independent Rademacher random variables, and
  $(a_{ij})_{i, j \in [m]}$ be a sequence of complex coefficients with
  magnitude bounded by $K$. Then,
  \[
  \left| \E \Big( \sum_{i,j \in [m]} a_{ij} \eps_i \eps_j \Big)^{s}
  \right| \leq (4Kms)^s.
  \]
\end{prop}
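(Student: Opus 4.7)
The plan is to bound the moment by a direct expansion of $X^s$, where $X := \sum_{i,j \in [m]} a_{ij} \eps_i \eps_j$, followed by a crude combinatorial count. Writing
\[
\E[X^s] = \sum_{(i_1,j_1),\ldots,(i_s,j_s) \in [m]^2} \Big(\prod_{\ell=1}^s a_{i_\ell j_\ell}\Big)\, \E\Big[\prod_{\ell=1}^s \eps_{i_\ell}\eps_{j_\ell}\Big],
\]
I would invoke the independence of the $\eps_k$ together with the identities $\E[\eps_k^{2r}]=1$ and $\E[\eps_k^{2r+1}]=0$. This forces the Rademacher expectation to equal $1$ when every index in $[m]$ appears an even number of times in the multiset $(i_1,j_1,\ldots,i_s,j_s)$ and to equal $0$ otherwise. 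Call such a tuple \emph{balanced}; only balanced tuples contribute to $\E[X^s]$, and each one contributes a quantity of modulus at most $K^s$ by the hypothesis $|a_{ij}| \leq K$.

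Next I would upper-bound the number of balanced tuples. Every balanced tuple can be produced by first choosing a perfect matching of the $2s$ positions $(i_1,j_1,\ldots,i_s,j_s)$ into $s$ pairs and then labeling each pair with a common element of $[m]$: if some index happens to appear $2r$ times, one just partitions its $2r$ positions arbitrarily into $r$ pairs. Since there are $(2s-1)!!$ perfect matchings on $2s$ elements and $m^s$ labelings per matching, the number of balanced tuples is at most $(2s-1)!!\, m^s$.

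Combining the two estimates and using the crude bound $(2s-1)!! \leq (2s)^s$ (for instance because each of the $s$ factors in the double factorial is at most $2s$), I would obtain
\[
\big|\E[X^s]\big| \leq K^s \cdot (2s-1)!! \cdot m^s \leq K^s (2s)^s m^s = (2Kms)^s \leq (4Kms)^s,
\]
which is the desired inequality, with substantial slack to spare. I do not anticipate any genuine obstacle; the only step requiring care is the combinatorial argument that every balanced tuple is realized by at least one matching-plus-labeling encoding, which is immediate upon grouping positions sharing a common index into arbitrary pairs.
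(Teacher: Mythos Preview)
Your proposal is correct and follows essentially the same approach as the paper: expand $X^s$, observe that only \emph{balanced} tuples (every index occurring with even multiplicity) survive, bound each surviving term by $K^s$, and overcount balanced tuples by a combinatorial encoding. The only minor difference is the encoding itself---you use perfect matchings on the $2s$ positions (yielding $(2s-1)!!\,m^s$), whereas the paper uses a choose-fill-permute scheme giving $\binom{2s}{s}s!\,m^s = 2^s(2s-1)!!\,m^s$; your count is a factor $2^s$ tighter, producing $(2Kms)^s$ rather than $(4Kms)^s$, but both suffice for the stated bound.
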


\begin{proof}
  By linearity of expectation, we can expand the moment as follows.
  \[
  \E \Big( \sum_{i,j \in [m]} a_{ij} \eps_i \eps_j \Big)^{s} =
  \sum_{\substack{(i_1, \ldots i_{s}) \in [m]^{s} \\
      (j_1, \ldots j_{s}) \in [m]^{s}}} \Big( a_{i_1 j_1} \cdots
  a_{i_s j_s} \E \Big[ \eps_{i_1} \cdots \eps_{i_s} \eps_{j_1} \cdots
  \eps_{j_s} \Big] \Big).
  \]
  Observe that $\E [\eps_{i_1} \cdots \eps_{i_s} \eps_{j_1} \cdots
  \eps_{j_s}]$ is equal to $1$ whenever all integers in the sequence
  \[(i_1, \ldots, i_s, j_1, \ldots, j_s)\] appear an even number of
  times.  Otherwise the expectation is zero. Denote by $S \subseteq
  [m]^{2s}$ the set of sequences $(i_1, \ldots, i_s, j_1, \ldots,
  j_s)$ that make the expectation non-zero.  Then,
  \[
  \left|\E \Big( \sum_{i,j \in [m]} a_{ij} \eps_i \eps_j
    \Big)^{s}\right| = \left|\sum_{(i_1, \ldots i_{s}, j_1, \ldots
      j_{s}) \in S} a_{i_1 j_1} \cdots a_{i_s j_s} \right| \leq K^{s}
  |S|.
  \]
  One way to generate a sequence $\sigma \in S$ is as follows. Pick
  $s$ coordinate positions of $\sigma$ out of the $2s$ available
  positions, fill out each position by an integer in $[m]$, duplicate
  each integer at an available unpicked slot (in some fixed order),
  and finally permute the $s$ positions of $\sigma$ that were not
  originally picked. Obviously, this procedure can generate every
  sequence in $S$ (although some sequences may be generated in many
  ways). The number of combinations that the combinatorial procedure
  can produce is bounded by $\binom{2s}{s} m^s (s!) \leq
  (4ms)^s$. Therefore, $|S| \leq (4ms)^s$ and the bound follows.
\end{proof}

We have used the following technical statement in the proof of
Lemma~\ref{lem:RIP:expectation}.

\begin{prop} \label{prop:deltaSqr} Suppose for real numbers $a > 0$,
  $\mu \in [0, 1]$, $\delta \in (0, 1]$, we have
  \[
  a \cdot \Big( \frac{a}{1+a} \Big)^{\frac{1}{1+\mu}} \leq
  \frac{\delta^{\frac{2+\mu}{1+\mu}}}{4}.
  \]
  Then, $a \leq \delta$.
\end{prop}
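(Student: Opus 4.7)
\textbf{Proof plan for Proposition~\ref{prop:deltaSqr}.}
My plan is to argue by contradiction: assume $a > \delta$ and derive that the left-hand side of the hypothesis must strictly exceed the right-hand side. The function $a \mapsto a/(1+a)$ is strictly increasing on $(0,\infty)$, so the entire expression $f(a) := a\cdot\bigl(a/(1+a)\bigr)^{1/(1+\mu)}$ is strictly increasing in $a$. Hence it suffices to lower bound $f(\delta)$.

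From $a > \delta$ I would get $a/(1+a) > \delta/(1+\delta)$, and since $\delta \in (0,1]$ implies $1+\delta \leq 2$, this gives $a/(1+a) > \delta/2$. Plugging in,
\[
f(a) \;>\; \delta \cdot \Bigl(\frac{\delta}{2}\Bigr)^{1/(1+\mu)}
\;=\; \delta^{(2+\mu)/(1+\mu)} \cdot 2^{-1/(1+\mu)}.
\]
For $\mu \in [0,1]$, the exponent $1/(1+\mu)$ lies in $[1/2,1]$, so $2^{-1/(1+\mu)} \geq 1/2$. Therefore
\[
f(a) \;>\; \tfrac{1}{2}\,\delta^{(2+\mu)/(1+\mu)} \;>\; \tfrac{1}{4}\,\delta^{(2+\mu)/(1+\mu)},
\]
contradicting the hypothesis. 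This yields $a \leq \delta$ as required.

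I do not expect any real obstacle here; the proposition is essentially a routine analytic inequality, and the only mildly delicate point is using the constraint $\delta \leq 1$ (to pass from $\delta/(1+\delta)$ to $\delta/2$) together with $\mu \in [0,1]$ (to control $2^{-1/(1+\mu)}$ by a uniform constant). No case analysis on the size of $a$ is needed because the single bound $\delta/(1+\delta) \geq \delta/2$ already provides enough slack to absorb the factor of $1/4$ on the right-hand side.
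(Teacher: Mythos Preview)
Your argument is correct. The contradiction is clean: assuming $a>\delta$, you use monotonicity of $x\mapsto x/(1+x)$ together with $\delta\le 1$ to get $a/(1+a)>\delta/2$, and then the bound $2^{-1/(1+\mu)}\ge 1/2$ (from $\mu\ge 0$) gives strictly more than the right-hand side $\tfrac14\,\delta^{(2+\mu)/(1+\mu)}$.

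The paper takes a somewhat different route. It first strengthens the right-hand side to $\delta^{(2+\mu)/(1+\mu)}/4^{1/(1+\mu)}$ (which dominates $\delta^{(2+\mu)/(1+\mu)}/4$ since $\mu\ge 0$), raises both sides to the power $1+\mu$ to obtain the polynomial inequality $a^{2+\mu}\le \delta^{2+\mu}(1+a)/4$, and then studies the convex function $f(a)=a^{2+\mu}-\delta^{2+\mu}a/4-\delta^{2+\mu}/4$, checking $f(0)<0$ and $f(\delta)>0$ to conclude $f(a)>0$ for all $a>\delta$. Your approach avoids both the exponentiation step and the convexity argument by bounding the original expression directly; it is shorter and more elementary. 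The paper's approach, on the other hand, makes the role of the polynomial structure $a^{2+\mu}$ versus the linear term $(1+a)$ more transparent. Both proofs use $\delta\le 1$ in the same essential place (to absorb $1+\delta$ or $3-\delta$ into a constant) and only need $\mu\ge 0$, not the upper bound $\mu\le 1$.
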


\begin{proof}
  Let $\delta' := \delta^{\frac{2+\mu}{1+\mu}} / 4^{\frac{1}{1+\mu}}
  \geq \delta^{\frac{2+\mu}{1+\mu}}/4$. From the assumption, we have
  \begin{equation} \label{eqn:deltaSqr:a} a \cdot \Big( \frac{a}{1+a}
    \Big)^{\frac{1}{1+\mu}} \leq \delta' \Rightarrow a^{2+\mu} \leq
    \delta^{2+\mu} (1+a)/4.
  \end{equation}
  Consider the function
  \[
  f(a) := a^{2+\mu} - \delta^{2+\mu} a/4 - \delta^{2+\mu}/4.
  \]
  The proof is complete if we show that, for every $a > 0$, the
  assumption $f(a) \leq 0$ implies $a \leq \delta$; or equivalently,
  $a > \delta \Rightarrow f(a) > 0$.  Note that $f(0) < 0$, and
  $f''(a) > 0$ for all $a > 0$. The function $f$ attains a negative
  value at zero and is convex at all points $a>0$. Therefore, it
  suffices to show that $f(\delta) > 0$. Now,
  \[
  f(\delta) = \delta^{2+\mu} - \delta^{3+\mu} /4 - \delta^{2+\mu}/4
  \geq (3 \delta^{2+\mu} - \delta^{3+\mu}) /4.
  \]
  Since $\delta \leq 1$, the last expression is positive, and the
  claim follows.
\end{proof}

\end{document}